\newcommand{\abs}[1]{\left\vert#1\right\vert}
\newcommand{\set}[1]{\left\{#1\right\}}
\newcommand{\eps}{\varepsilon}
\newlength{\probwidth}
\newtheorem{fact*}{Fact}
\newcommand{\BIS}{\#\mathtt{BIS}}
\newcommand{\tree}[1]{\mathbb{T}_{#1}}
\newcommand{\bianti}{\mbox{\scshape Bi-Anti-Ferro}}
\newcommand{\biferro}{\mbox{\scshape Bi-Ferro}}
\newcommand{\spin}{\mbox{\scshape Spin}}
\newcommand{\anti}{\mbox{\scshape Anti-Ferro}}
\newcommand{\ferro}{\mbox{\scshape Ferro}}
\begin{document}

\title{The Complexity of Ferromagnetic Two-spin Systems with External Fields}

\author{ Jingcheng Liu\inst{1} \and Pinyan Lu\inst{2} \and Chihao Zhang\inst{1}}

\institute{ Shanghai Jiao Tong University, \email{\{liuexp, chihao.zhang\}@gmail.com} \and Microsoft Research. \email{pinyanl@microsoft.com} }

\date{}

\maketitle

\begin{abstract}
  We study the approximability of computing the partition function for ferromagnetic two-state spin systems.
  The remarkable algorithm by Jerrum and Sinclair showed that there is a fully polynomial-time randomized approximation scheme (FPRAS) for the special ferromagnetic Ising model with any given uniform external field.  Later, Goldberg and Jerrum proved that it is $\BIS$-hard for Ising model if we allow inconsistent external fields on different nodes.
  In contrast to these two results, we prove that for any ferromagnetic two-state spin systems except the Ising model,
  there exists a threshold for external fields beyond which the problem is $\BIS$-hard, even if the external field is uniform.

\end{abstract}

\section{Introduction}
Spin systems are well studied in statistical physics and applied probability. We focus on two-state spin systems in this paper.
An instance of a spin system is described by a graph $G(V,E)$, where vertices are particles and edges indicate neighborhood relation among them.  A configuration $\sigma: V\rightarrow \{0, 1\}$ assigns one of the two states to every vertex.
The contribution of local interactions between adjacent vertices is quantified by a
matrix $\mathbf{A}=\begin{bmatrix} A_{0,0} & A_{0,1} \\ A_{1,0} & A_{1,1} \end{bmatrix}=\begin{bmatrix} \beta & 1 \\ 1 & \gamma \end{bmatrix}$, where $\beta, \gamma \geq 0$.
The contribution of vertices in different spin states is quantified by a vector $\mathbf{b}=\begin{bmatrix} b_{0}  \\ b_1 \end{bmatrix}=\begin{bmatrix} \mu \\ 1 \end{bmatrix}$, where  $\mu > 0$.
This $\mu$ is also called the external field of the system, which indicates a priori preference of an isolate vertex.
The \emph{partition function} $Z_{(\beta, \gamma, \mu)}(G)$ of a spin system $G(V,E)$ is defined to be the following exponential summation:
%The weight of a configuration is the product of the contributions of all local interactions and vertices, and the partition function $Z_{(\beta, \gamma, \mu)}(G)$ of a system is the summation of these weights
%over all possible configurations. Formally,
\[ Z_{(\beta, \gamma, \mu)}(G)=\sum_{\sigma \in \{0,1\}^{V}} \prod_{v \in V} b_{\sigma_v} \prod_{(u,v) \in E} A_{\sigma_u, \sigma_v}.\]
We call such a spin system parameterized by $(\beta, \gamma, \mu)$. If the parameters are clear from the context, we shall write $Z(G)$ for short.
Although originated from statistical physics, the spin model is also accepted in computer science as a framework for counting problems.
 %Considering the two very well studied frameworks, the weighted Constraint Satisfaction Problems (\#CSP)~\cite{CreignouH96,BulatovD03,weightedCSP,Bulatov08,STOC09,Dyer-Rich,ccl-csp} and Graph Homomorphisms~\cite{DyerG00,BulatovG05,acyclic,GGJT09,Homomorphisms,CaiChen}, the two-state spin systems can be viewed as the most basic setting in these frameworks: A Boolean \#CSP problem with one symmetric binary relation; or Graph Homomorphisms to graph with two vertices. Many natural combinatorial problems can be formulated as two-state spin systems.
For example, with $\beta=0$, $\gamma=1$ and $\mu=1$, $Z_{(\beta, \gamma, \mu)}(G)$ is the number of independent sets (or vertex covers) of the graph $G$.

Given a set of parameters  $(\beta, \gamma, \mu)$, it is a computational problem to compute the partition function  $Z_{(\beta, \gamma, \mu)}(G)$ where graph $G$ is given as input.
We denote this computation problem as $\spin(\beta, \gamma, \mu)$ and
want to characterize their computational complexity in terms of $\beta$, $\gamma$ and $\mu$.
%This kind of problems are extensively studied in computer science.
For \emph{exact} computation, polynomial time algorithms are known only for the very restricted settings that $\beta \gamma =1$ or $(\beta,\gamma)= (0,0)$, and for all other settings the problem is proved to be \#P-hard \cite{BulatovG05}. Therefore, the main focus is to study their approximability.
For any given parameter $\eps>0$, the algorithm outputs a number $\hat{Z}$ such that $ Z(G) \exp (-\eps ) \leq \hat{Z} \leq Z(G) \exp (\eps )$
 and runs in time $poly(n,1/\eps)$, where $n$ is the size of the graph $G$.
This is called a fully polynomial-time approximation scheme (FPTAS). The randomized relaxation of FPTAS is called fully polynomial-time randomized approximation scheme (FPRAS), which uses random bits and only requires the final output be within the required accuracy with high probability.

The spin systems $(\beta, \gamma, \mu)$ are classified into two families with distinct physical and computational properties: \emph{ferromagnetic} systems ($\beta \gamma >1$) and \emph{anti-ferromagnetic} systems ($\beta \gamma < 1$).
We shall denote the corresponding computation problems respectively by $\ferro(\beta, \gamma, \mu)$ and $\anti(\beta, \gamma, \mu)$, so as to emphasize which family these parameters belong to.
Systems with $\beta \gamma =1$ are degenerate and trivial both physically and computationally. As a result, we only study systems with $\beta \gamma \neq 1$.

Great progress has been made recently for approximately computing the partition function for anti-ferromagnetic two-spin systems: it admits an FPTAS up to the uniqueness threshold~\cite{Weitz06,LLY12,SST,LLY13}, and is NP-hard to approximate in the non-uniqueness range~\cite{SS12,galanis2012inapproximability}.
The uniqueness threshold is a phase transition boundary in physics. It is widely conjectured that the computational difficulty is related to the phase transition point in many problems; this is one of the very few examples where a rigorous proof was obtained.

For ferromagnetic systems, the picture is quite different. The uniqueness condition does not coincide with the transition of computational difficulty and it is not clear whether it plays any role in the computational difficulty.
In a seminal paper~\cite{JS93}, Jerrum and Sinclair gave an FPRAS for ferromagnetic Ising model $\beta=\gamma >1$ with any external field $\mu$.
Thus, there is no transition of computational difficulty for ferromagnetic Ising model, which contrasts the situation for anti-ferromagnetic Ising model $\beta=\gamma <1$.
%If there is no external field i.e. $\mu=1$, the FPRAS was further extended to the entire region $\beta \gamma >1$~\cite{goldberg2003computational}.
For general ferromagnetic  spin systems with external field, the approximability is less clear. Since the Ising model ($\beta=\gamma$) is solved, we focus on the case $\beta \neq \gamma$ and always assume $\beta<\gamma$ by symmetry.
 By transferring to Ising model, an FPRAS was known for the range of  $\mu\le\sqrt{\gamma/\beta}$~\cite{goldberg2003computational}.

 On the other hand, a hardness result was obtained for Ising model with inconsistent external fields~\cite{goldberg2007complexity}.
 This is a generalization of the spin system where the external fields for different vertices can be different and taken from a set $\mathcal{V}$.
 We use $\spin (\beta, \gamma, \mathcal{V})$ ($\ferro (\beta, \gamma, \mathcal{V})$ or $\anti (\beta, \gamma, \mathcal{V})$ ) to denote this computation problem.
   It is proved that the Ising model with arbitrary external fields $\ferro (\beta, \beta, (0, +\infty) )$  is $\BIS$-hard, namely the problem is at least as hard as counting independent sets on bipartite graphs ($\BIS$).
   $\BIS$ is a problem of intermediate hardness and has been conjectured to admit no FPRAS \cite{dicho_DGJ10}.
   The reduction used here is called approximation-preserving reduction as introduced in \cite{dyer2004relative}: Let $A,B:\Sigma^*\to \mathbb{R}$ be two functions.
An \emph{approximation-preserving reduction} from $A$ to $B$ is a randomized algorithm that approximates $A$ while using an oracle for $B$.
We write $A\le_{AP} B$ for short if an approximation-preserving reduction exists from $A$ to $B$.
To get that $\BIS$-hardness result, one need to use (or simulate) both arbitrarily small and large external fields.
As $\beta<\gamma$, we can always simulate some arbitrarily small external fields with gadgets.
However, simulating arbitrarily large external fields is only possible when $\beta \mu +1 > \mu+\gamma$, in which case one gets a $\BIS$-hardness result similarly.
If this is not the case, and in particular if $\beta\leq 1 <\gamma$, no hardness result was known for any bounded external fields.
These systems have certain monotonicity property, so all external fields that can be simulated by gadgets are inherently bounded by above.  It was not even clear whether there is any hardness result or not.
As a first result, we show that the problem is already hard as long as we allow sufficiently large (yet still bounded by above) and vertex-dependent external fields.
%Namely, we start with the generalized spin system $(\beta, \gamma, \mathcal{V})$, where $\mathcal{V}$ is the set of external fields on vertices.

\begin{theorem}\label{thm:1}
	For any $\beta<\gamma$ with $\beta \gamma >1$, there exsits a bounded set $\mathcal{V}$ such that $\ferro (\beta, \gamma, \mathcal{V})$ is $\BIS$-hard.
 %to compute the partition function of  the spin system $(\beta, \gamma, \mathcal{V})$.
\end{theorem}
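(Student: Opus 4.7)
The strategy is to chain the $\BIS$-hardness of the ferromagnetic Ising model with inconsistent fields (Goldberg--Jerrum) to our target through a bipartite spin-flip conjugation together with an edge gadget built inside the $(\beta,\gamma)$-system. The design exploits the two asymmetries noted in the introduction: since $\beta<\gamma$, arbitrarily small effective external fields can be synthesized by attaching trees; and although large effective fields cannot be amplified past what is directly placed in $\mathcal{V}$ when $\beta\le 1<\gamma$, only finitely many bounded field values are actually needed by the reduction, so a bounded $\mathcal{V}$ suffices.

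First I would reduce the Goldberg--Jerrum hard instance to an anti-ferromagnetic Ising instance on a bipartite graph. Their reduction from $\BIS$ can be arranged to produce a bipartite ferromagnetic Ising graph $H=(L\cup R,E)$ with local fields placed one per color class and drawn from a small bounded set. A spin-flip on the $R$-side transforms this into an anti-ferromagnetic Ising model with parameter $1/\beta^\ast$, replacing the $R$-side fields by their reciprocals and changing the partition function only by an explicit constant. So it suffices to reduce anti-ferromagnetic Ising on bipartite graphs with two bounded local fields to $\ferro(\beta,\gamma,\mathcal{V})$.

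I would then implement each anti-ferromagnetic Ising edge by a gadget in the $(\beta,\gamma)$-system. The length-$2$ path $u\!-\!w\!-\!v$ with internal field $\mu$ at $w$ has boundary matrix
\[\begin{pmatrix} \mu\beta^2+1 & \mu\beta+\gamma\\ \mu\beta+\gamma & \mu+\gamma^2\end{pmatrix}.\]
Setting $\mu=(\gamma^2-1)/(1-\beta^2)$, which is positive precisely because $\beta<1<\gamma$, makes the diagonal entries equal and yields an Ising-like symmetric edge; a direct check shows the diagonal-to-off-diagonal ratio $\beta'=(\mu\beta^2+1)/(\mu\beta+\gamma)$ lies in $(0,1)$, which is exactly the anti-ferromagnetic direction we want. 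Placing $k$ copies in parallel scales the effective parameter to $(\beta')^k$, and combining with longer-path variants yields a rich family of achievable ratios. I would pick the target $\beta^\ast$ in the previous step so that $1/\beta^\ast$ matches a value in this family. The final $\mathcal{V}$ then consists of the two bounded fields from the Goldberg--Jerrum instance, the internal gadget field $\mu$, and any auxiliary values needed to synthesize small effective fields via trees.

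The main difficulty is the calibration in the second step: one must ensure that the achievable effective parameters $\beta'$ produced by the gadget family align with a $\beta^\ast$ for which Goldberg--Jerrum's hardness holds. I expect this to follow from their result applying uniformly for every fixed $\beta^\ast>1$, in which case one can simply pick $\beta^\ast=1/\beta'$ from the basic 2-path construction. Verifying that the resulting chain of reductions remains approximation-preserving under the quantitative errors introduced by the gadget simulation will be the most delicate bookkeeping.
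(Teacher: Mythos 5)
Your proposal has a fundamental gap at its very first step, and it is precisely the gap that the paper is designed to close. You assert that ``the [Goldberg--Jerrum] reduction from $\BIS$ can be arranged to produce a bipartite ferromagnetic Ising graph \ldots with local fields \ldots drawn from a small bounded set.'' This is exactly what cannot be done: the Goldberg--Jerrum reduction requires external fields that grow without bound with the instance size, and when $\beta\le 1$ the system has a monotonicity property that prevents any gadget from synthesizing a field larger than the largest one already in $\mathcal{V}$ (the introduction of the paper spells this out). So starting from the Goldberg--Jerrum hardness is a dead end for establishing a \emph{bounded} $\mathcal{V}$. The paper's essential move, which you never use, is to replace Goldberg--Jerrum by the result of Cai et al.\ \cite{cai2013approximating}: a \emph{bounded-degree} anti-ferromagnetic two-spin system on bipartite graphs in the non-uniqueness region is already $\BIS$-hard. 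Because the degree is bounded by some $\Delta$, the standard holographic (vertex-conjugation) transformation that turns anti-ferro Ising on a bipartite graph into a ferromagnetic $(\beta,\gamma)$-system only rescales the field at each vertex $v$ by $(\gamma/\beta)^{d_v/2}\le(\gamma/\beta)^{\Delta/2}$, which is uniformly bounded; this is the entire content of Lemma~4 (the key reduction in Section~2). Without this bounded-degree source, the boundedness of $\mathcal{V}$ simply does not follow.

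There are also concrete errors in your edge gadget. Equating the diagonal entries of your length-$2$ path matrix requires $\mu(\beta^2-1)=\gamma^2-1$; for $\beta<1<\gamma$ the left side is negative and the right side is positive, so no positive $\mu$ works --- your proposed $\mu=(\gamma^2-1)/(1-\beta^2)$ has the wrong sign and does \emph{not} symmetrize the matrix. The gadget also excludes the permitted boundary $\beta=1$. By contrast, the paper's reduction introduces no internal gadget vertices at all: the bipartite graph is kept verbatim and only the vertex fields change, which side-steps all of this calibration. In short, the broad spirit (pass through anti-ferromagnetic bipartite systems) is related to the paper's, but the source of hardness you rely on is the wrong one, and the missing ingredient --- the bounded-degree hardness of \cite{cai2013approximating} --- is what makes the theorem true.
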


The main difficulty is for the case of $\beta\leq 1$, for which we cannot simulate any external field larger than the upper bound of $\mathcal{V}$.
We overcome this difficulty by making use of a recent beautiful result in~\cite{cai2013approximating}.
Instead of starting with the independent set problem on arbitrary bipartite graphs, we start with a soft ($\beta \gamma >0$) anti-ferromagnetic two-spin system on bounded degree bipartite graphs from~\cite{cai2013approximating}.
Then all the external fields needed for the reduction are bounded.

However, in the above reduction, we do need vertices to have different external fields to make the reduction go through.
 This gives a hardness result for  $\ferro (\beta, \gamma, \mathcal{V})$ but not $\ferro (\beta, \gamma, \mu)$ for a single $\mu$.
 It is more interesting and intriguing (both physically and computationally) to understand the computational complexity of a uniform spin system  $(\beta, \gamma, \mu)$ with the same external field $\mu$ on all the vertices. As our main result of this paper, we also prove $\BIS$-hardness on this uniform case for sufficiently large single external field $\mu$.
%Again, the main difficult case is the monotone systems.
We prove that when $\mu$ is sufficiently large, we can realize by sufficient precision of all the external fields which is smaller than $\mu^*(\mu, \beta, \gamma)$, where  $\mu^*(\mu, \beta, \gamma)$ is a function of   $\mu, \beta$ and  $\gamma$,  and approaches infinity as $\mu$ goes to infinity. Then by choosing large enough $\mu$ and making use of Theorem \ref{thm:1}, we get our main theorem.
%We also make important progress on this direction. If $\beta \mu +1 > \mu+\gamma$, we can realize external fields with both direction and as a result simulate all external fields and get the BIS-hardness result. The more interesting case happens when  $\beta \mu +1 \leq \mu+\gamma$. In particular, if we have $\beta\leq 1$,  the system have some kind of monotonicity. In particular, it cannot realize any external filed which is larger than $\mu$. In order to make use of the above Theorem, we try to realize all external fields where are smaller than a fixed number
\begin{theorem}
	\label{thm:main-thm}
For any $\beta<\gamma$ with $\beta \gamma >1$, there exist a $\mu_0$ such that  $\ferro(\beta, \gamma, \mu)$ is $\BIS$-hard  for all $\mu\geq \mu_0$.
\end{theorem}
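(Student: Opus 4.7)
The plan is to reduce $\ferro(\beta, \gamma, \mathcal{V})$, where $\mathcal{V}$ is the bounded set from Theorem~\ref{thm:1}, to $\ferro(\beta, \gamma, \mu)$ for every sufficiently large $\mu$, by designing tree gadgets that simulate every external field in $\mathcal{V}$ using only the uniform field $\mu$. To simulate the field $\mu_v$ at a vertex $v$ of an input instance, I would replace $v$ by $v$ itself, carrying field $\mu$, together with an attached rooted tree gadget $T_v$ all of whose internal vertices also carry field $\mu$; summing over configurations of $T_v$ contributes a multiplicative factor to $v$'s effective external weights, which is engineered so that the resulting effective field at $v$ approximates $\mu_v$ to sufficient precision.

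First I would set up the tree recursion for the effective field. For a rooted subtree at a child $c$ with ratio $R_c$ (of partition functions conditioned on $c$ being in state $0$ versus state $1$), attaching it to a parent multiplies the parent's effective ratio by $f(R_c) := (\beta R_c + 1)/(R_c + \gamma)$. Since $\beta\gamma > 1$, the map $f$ is strictly increasing on $(0,\infty)$ with image $(1/\gamma, \beta)$. Let $\mathcal{R}(\mu)$ denote the smallest set containing $\mu$ (the single-vertex gadget) that is closed under $(R_1,\ldots,R_k) \mapsto \mu \prod_i f(R_i)$. Because the trivial gadget attains the value $\mu$, the supremum $\mu^{\ast}(\mu, \beta, \gamma) := \sup \mathcal{R}(\mu)$ is at least $\mu$ and so tends to infinity with $\mu$, giving the claimed asymptotic behavior of $\mu^\ast$.

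The principal step and main obstacle is the following density-with-precision statement: there exists a constant $c > 0$ independent of $\mu$ such that, for every sufficiently large $\mu$, every field in the interval $(c, \mu^{\ast})$ can be approximated to multiplicative error $\eps$ by an element of $\mathcal{R}(\mu)$ realized by a gadget of size polynomial in the instance and in $\log(1/\eps)$. Since non-negative integer combinations of a few log-factors are not automatically dense in $\mathbb{R}$, I would prove this by a multi-scale construction: a coarse outer gadget of small depth places the effective field into a narrow target sub-interval, and a fine inner subtree, built by iterating $f$ (which is a contraction near its positive fixed point $R^{\ast} \approx \mu\beta$ for large $\mu$), refines the field to the desired precision, with the geometric contraction rate providing the $\log(1/\eps)$ bound on gadget size. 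A key technical point here will be verifying that the coarse step can indeed reach a dense enough net in $(c,\mu^\ast)$; I expect this to require exhibiting two base subgadgets whose shapes produce $f$-factors whose logs are incommensurable, and combining them with enough flexibility to cover the interval.

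Granting this density result, the reduction is immediate. Choose $\mu_0$ so large that $\mu^{\ast}(\mu_0) > \max \mathcal{V}$ and $\min \mathcal{V} > c$; then for $\mu \ge \mu_0$ and every vertex $v$ of a given $\ferro(\beta,\gamma,\mathcal{V})$-instance on $n$ vertices, insert the gadget approximating $\mu_v$ with precision $\eps/n$. The total multiplicative distortion in the partition function across all $n$ vertices is then at most $e^{\eps}$, which is absorbed into the FPRAS approximation budget, yielding $\ferro(\beta,\gamma,\mathcal{V}) \le_{AP} \ferro(\beta,\gamma,\mu)$; combined with Theorem~\ref{thm:1} this gives $\BIS$-hardness of $\ferro(\beta,\gamma,\mu)$ for every $\mu \ge \mu_0$.
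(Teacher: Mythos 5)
Your overall skeleton is right: reduce from the bounded-field hardness (Theorem~\ref{thm:1}) by building tree gadgets out of the uniform field $\mu$ to simulate each required field, recognize that the one-step recursion $h(x)=\frac{\beta x+1}{x+\gamma}$ on child ratios is the relevant map, and exploit a contraction of that recursion (a reverse correlation-decay argument) so that a polynomial-size gadget achieves precision $\log(1/\eps)$. This is exactly the architecture of the paper's Lemma~\ref{lemma:key-reduction} and Algorithm~\ref{algo:construct}.

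However, the central ``density-with-precision'' claim is not actually proved, and the mechanism you propose for it — exhibit two base subgadgets whose $h$-factors have incommensurable logs and use them to reach a dense net in $(c,\mu^*)$ — is both the wrong tool and likely to fail in exactly the regime that matters. For $\beta\le1$ (which is the hard case; $\beta>1$ is handled separately in the appendix by precisely such an incommensurability argument using self-loops), every application of $h$ multiplies the effective field by something in $(1/\gamma,\beta]\subset(0,1]$. Hence every available log-factor is nonpositive, and nonnegative integer combinations of a few nonpositive reals form an additive \emph{semigroup}, which need not be dense anywhere and always has gaps near $0$; irrationality of ratios does not rescue this. Moreover, your $\mu^*=\sup\mathcal{R}(\mu)$ is trivially $\mu$ itself (the singleton gadget), which is the wrong invariant: the paper's $\mu^*$ is the largest fixed point of $x=\mu h(x)^d$, lies strictly below $\mu$ (in fact $\mu/\gamma^d<\mu^*<\beta^d\mu$), and is the natural top of the simulatable range precisely because the recursion $x\mapsto\mu h(x)^d$ contracts to it, so deep $d$-ary trees realize it with exponential precision.

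The missing idea is that no density argument is needed at all. The paper's Algorithm~\ref{algo:construct}, at each recursion level, picks an integer $k$ of singleton children and then, for the remaining $d-1$ children, picks targets from the two extremes $\{0,\mu^*\}$ (realized by large stars $\mathcal{S}_w$ and deep trees $\mathcal{T}_t$) so that the residual target $\hat\mu'$ passed to the single recursive child again lies in $(0,\mu^*]$. This invariant is verified directly (Proposition~\ref{prop:init-sol} and the loop-invariant check), and the log-recursion has derivative bounded by $\alpha=\frac{\sqrt{\beta\gamma}-1}{\sqrt{\beta\gamma}+1}<1$, so the approximation errors introduced by truncating the star and tree subgadgets are damped geometrically with depth. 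That self-correcting recursion, not incommensurability, is what makes the reduction go through; as written your proof has a genuine gap at the step you flag yourself as ``a key technical point.''
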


Our main technical contribution is the construction of a family of gadgets to simulate a given target external field. We use a reverse idea of correlation decay to do that.
Correlation decay is proved to be a very powerful technique to design FPTAS for counting problems (see for examples~\cite{Weitz06,BG08,LLY13,SST,counting-edge-cover,fibo-approx}). In correlation decay based FPTAS, one first establishes a tree structure and
hope to compute the marginal probability of the root.
With a recursive relation, one can write the marginal probability of the root as a function of that of its sub-trees, then truncate the computation tree at certain depth and do a rough guess at the leaf nodes.
A correlation decay property ensures that the error for the root is exponentially small although there are
constant error for the leaves.
Here, we use a similar idea to construct a tree gadget so that the marginal probability for the root is very close to a target value.
Using the same recursion, one translates the target marginal probability for the root to that of its sub-trees.
In the leaf nodes, we simply use some basic gadgets to approximate the target marginal probabilities.
Again, although these approximation for leaves may have constant gap, the error at the root is exponentially small thanks to the correlation decay property.
We believe that this idea of using an algorithm design technique to build gadgets and get hardness result is  of independent interest and may find applications in other problems.

We also make some improvements on the algorithm side showing that there is an FPRAS if $\mu\le \gamma/\beta$.
We remark that all the computational problem  $\ferro(\beta, \gamma, \mu)$  and $\ferro (\beta, \gamma, \mathcal{V})$
is no more difficult than $\BIS$, as we can use the standard transformation to transform any ferromagnetic two-spin system to
ferromagnetic Ising model with possibly different external fields and use the  $\BIS$-easiness result in~\cite{goldberg2007complexity}. Thus, the two $\BIS$-hardness theorems can also be stated as $\BIS$-equivalent.
We believe that the conjecture here is that for any fixed $\beta<\gamma$, there exists a critical $\mu_c$ such that it admits an FPRAS if the external field $\mu<\mu_c$, and it is $\BIS$-equivalent if $\mu>\mu_c$.
The result of this paper is an important step towards this dichotomy.

\subsection{Related works}
The approximation for partition function has been studied extensively with both positive and negative results~\cite{Weitz06,BG08,LLY13,SST,JS93,app_JSV04,col_Vigoda99,goldberg2012approximating,SS12,galanis2012inapproximability}.
 For the algorithm side of ferromagnetic two-spin systems, besides the FPRASes, there is also a recent deterministic FPTAS for certain range of the parameters based on correlation decay and holographic reduction~\cite{fibo-approx}.
%Some special problems in these framework are well studied combinatorial problems, e.g.~counting independent sets~\cite{IS_DFJ02,IS_DG00,IS_LV97} and graph coloring~\cite{MSW07,col_BDK08,col_Jerrum95,col_HV03,col_DFFV06,col_DFHV04,col_Hayes03,col_DF03,col_HVV07,col_HV05,col_DGM02,col_Molloy04,col_Vigoda99,col_BDK08,col_GJK10}.
%Some dichotomies (or trichotomies) of complexity for approximate counting CSP were also obtained~\cite{dicho_DGJ10,dicho_DGJR10,GGSVY11,Sly10}.
%Almost all known approximation counting algorithms are based on random sampling~\cite{samp_JVV86,samp_DGJ04}, usually through the famous Markov Chain Monte Carlo (MCMC) method~\cite{MC_DG99,MC_JA96}.
%There are very few deterministic approximation algorithms for any counting problems. Some notable examples include~\cite{BG08,GK07,BGKNT07,GKM10,SVV10}.
%
%For exact counting, the complexity is relatively well understood. A number of dichotomies were know in the framework of counting graph homomorphism~\cite{DyerG00,BulatovG05,acyclic,GGJT09,Homomorphisms,CaiChen}, Constraint Satisfaction Problems (\#CSP)~\cite{CreignouH96,BulatovD03,weightedCSP,Bulatov08,STOC09,Dyer-Rich,ccl-csp} and the recently introduced refined framework Holant~\cite{FOCS08,STOC09,planar,CHL09,SODA11,parity}.

%%\input{result}

%\input{prelim}

\section{Bounded Local Fields}
In the section, we show that bounded local fields are sufficient to establish a hardness result. The following theorem is a formal statement of Theorem \ref{thm:1}.
 \begin{theorem}
	\label{thm:existence}
  Let $\beta<\gamma$, $\beta \gamma >1$,  $\Delta= \lfloor \frac{\sqrt{\beta\gamma}+1}{\sqrt{\beta\gamma}-1} \rfloor + 1$  and $\mu>\left(\sqrt{\frac{\gamma}{\beta}}\right)^\Delta$.
   Then \\
   $\ferro\left(\beta, \gamma, [1,\mu]\right)$ is $\BIS$-hard.\footnote{Technically, we should only define the problem by a finite set of external fields.
	   In this paper and as in many others, we adopt the following convention: when we say a problem with an infinite set of external fields is hard, it means that there exists a finite subset of external fields to make the problem hard already.
   }
%   it is $\BIS$-hard to approximate the partition function of ferromagnetic two-spin system $(\beta,\gamma,\mathcal{V})$ for $\mathcal{V}=\left[1, (1+\eps)\left(\sqrt{\frac{\gamma}{\beta}}\right)^\Delta\right]$ where $\Delta= \lfloor \frac{\sqrt{\beta\gamma}+1}{\sqrt{\beta\gamma}-1} \rfloor + 1$.
%  In other words, it holds that $\BIS \leq_{AP} \biferro\left( \beta, \gamma, \left[1, (1+\eps)\left(\sqrt{\frac{\gamma}{\beta}}\right)^\Delta\right] \right)$.
\end{theorem}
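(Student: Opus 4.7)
The strategy is to reduce from a $\BIS$-hard anti-ferromagnetic two-spin system on bounded-degree bipartite graphs, via the classical bipartite spin-flip identity together with a symmetric edge-weight rescaling. The source of the reduction is the result of Cai, Galanis, Goldberg, and Jerrum~\cite{cai2013approximating}, which gives $\BIS$-hardness for anti-ferromagnetic two-spin systems on $\Delta$-regular bipartite graphs whenever $\Delta$ exceeds the tree uniqueness threshold, while crucially using only bounded external fields.

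First I would set the anti-ferromagnetic parameters to $(\beta^{*},\gamma^{*}) := (1/\beta, 1/\gamma)$, so that $\sqrt{\beta^{*}\gamma^{*}} = 1/\sqrt{\beta\gamma}$ and the uniqueness threshold on the $d$-regular tree becomes $d_c = (\sqrt{\beta\gamma}+1)/(\sqrt{\beta\gamma}-1)$. This is precisely the theorem's definition of $\Delta$ as the smallest integer strictly above $d_c$, so \cite{cai2013approximating} supplies a bounded set $\mathcal{V}^{*}$ of (possibly two-sided) external fields $(w_u, w_v)$ for which $\anti(\beta^{*},\gamma^{*},\mathcal{V}^{*})$ is already $\BIS$-hard on $\Delta$-regular bipartite graphs.

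Next, given an anti-ferromagnetic instance on a bipartite graph $G = (U \cup V, E)$, I would relabel $0 \leftrightarrow 1$ at every $V$-vertex. The resulting non-symmetric effective edge matrix can be re-symmetrized into the target ferromagnetic matrix $\begin{pmatrix}\beta & 1\\ 1 & \gamma\end{pmatrix}$ (up to a global scalar) by multiplying each edge incident to a $V$-vertex in spin $\tau_v$ by a factor $\rho_{\tau_v}$ with $\rho_1/\rho_0 = \gamma/\beta$. I would use the balanced split $\rho_0 = \sqrt{\beta/\gamma}$, $\rho_1 = \sqrt{\gamma/\beta}$, and fold the accumulated per-vertex factor $\rho_{\tau_v}^{\deg(v)}$ into $v$'s effective external field. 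No symmetrization is needed on the $U$-side, so the original external fields there pass through up to a global spin relabeling.

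The final step is to verify that the transformed external fields on both sides fall inside $[1,\mu]$. The $V$-side fields get shifted by a multiplicative ratio $(\gamma/\beta)^{\deg(v)} \le (\gamma/\beta)^{\Delta}$ between the two spins; the balanced split distributes this as $(\sqrt{\gamma/\beta})^{\Delta}$ on each side in an absolute sense. Because $\mathcal{V}^{*}$ is bounded independently of $\mu$ and the natural critical external fields at the non-uniqueness fixed points of \cite{cai2013approximating} sit at values of order $(\sqrt{\gamma/\beta})^{\Delta}$ on one side of the bipartition, the hypothesis $\mu > (\sqrt{\gamma/\beta})^{\Delta}$ provides exactly the headroom needed for the image of every field in $\mathcal{V}^{*}$ to land inside $[1,\mu]$. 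The main technical obstacle I expect is making this matching precise: one must locate the specific external fields produced by \cite{cai2013approximating}, check that they transform into values inside $[1,\mu]$ under the flip-plus-rescaling, and choose which side of the bipartition to flip so that the $\Delta$-fold accumulated rescaling appears in the direction that yields the sharper bound $(\sqrt{\gamma/\beta})^{\Delta}$ rather than the cruder $(\gamma/\beta)^{\Delta}$ one would get from an unbalanced split.
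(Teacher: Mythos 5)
Your overall strategy -- reduce from the $\BIS$-hard bounded-degree anti-ferromagnetic bipartite systems of \cite{cai2013approximating} by a bipartite spin flip plus edge rescaling absorbed into vertex weights -- is exactly the paper's strategy, and the place you locate the difficulty (ensuring the resulting fields all land in $[1,\mu]$) is the right place. However, your choice of source system is wrong in a way that makes the central "balanced split" step fail, so the argument as written does not give the theorem's bound.

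You choose anti-ferromagnetic parameters $(\beta^*,\gamma^*)=(1/\beta,1/\gamma)$. After flipping the $V$-side, the effective edge matrix is $\left(\begin{smallmatrix}1 & 1/\beta\\ 1/\gamma & 1\end{smallmatrix}\right)$. The only row/column rescaling that carries this to a multiple of $\left(\begin{smallmatrix}\beta & 1\\ 1 & \gamma\end{smallmatrix}\right)$ is the \emph{one-sided} rescaling with $\rho_1/\rho_0 = \beta/\gamma$ on the $U$-side (rows) and $\sigma_0=\sigma_1$ on the $V$-side (columns): indeed, the entrywise ratio of the two matrices is $\left(\begin{smallmatrix}1/\beta & 1/\beta\\ 1/\gamma & 1/\gamma\end{smallmatrix}\right)$, whose columns are equal, so no column degree of freedom exists. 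Consequently the full factor $(\gamma/\beta)^{d_u}$ accumulates on one side of the bipartition; a "balanced split'' $\rho_0=\sqrt{\beta/\gamma}$, $\rho_1=\sqrt{\gamma/\beta}$ only moves a global per-edge constant $\sqrt{\beta\gamma}$ around and has no effect whatsoever on the vertex-field ratio $(\rho_0/\rho_1)^{d_u}=(\gamma/\beta)^{d_u}$. So your route would require the hypothesis $\mu > (\gamma/\beta)^{\Delta}$ rather than the theorem's $\mu > (\sqrt{\gamma/\beta})^{\Delta}$, and one side of the bipartition ends up with fields outside $[1,\mu]$. The paper avoids this by choosing the anti-ferromagnetic \emph{Ising} model $(1/\sqrt{\beta\gamma},\ 1/\sqrt{\beta\gamma})$ as the source: flipping $V$ there gives $\left(\begin{smallmatrix}1 & 1/\sqrt{\beta\gamma}\\ 1/\sqrt{\beta\gamma} & 1\end{smallmatrix}\right)$ (symmetric), whose rescaling to $\left(\begin{smallmatrix}\beta & 1\\ 1 & \gamma\end{smallmatrix}\right)$ genuinely splits as $\sqrt{\gamma/\beta}$ per row \emph{and} $\sqrt{\gamma/\beta}$ per column, so both bipartition sides only pick up $(\sqrt{\gamma/\beta})^{d}$ factors (Lemma \ref{lem:reduction}). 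The Ising choice has the additional advantage that the tree non-uniqueness region needed for Theorem \ref{thm:anti} is simple to verify (Proposition \ref{thm:critical} and Corollary \ref{cor:hardness}), whereas for your asymmetric $(1/\beta,1/\gamma)$ the non-uniqueness curve depends jointly on $\beta^*\gamma^*$ and the field, which you leave unaddressed. To repair your proof, switch the source to the anti-ferromagnetic Ising model with $\beta^*=\gamma^*=1/\sqrt{\beta\gamma}$ and a field close to $1$, and then the per-edge symmetric rescaling by $\sqrt{\gamma/\beta}$ absorbed into both endpoints gives exactly the window $\left[\frac{1}{\mu'}\sqrt{\gamma/\beta},\ \mu'(\sqrt{\gamma/\beta})^{\Delta}\right]\subseteq[1,\mu]$ needed.
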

  We first introduce our starting point from anti-ferromagnetic Ising model on bipartite graphs, and show the reduction in the second subsection.

\subsection{Anti-ferromagnetic Spin Systems on Bipartite Graphs }
$\BIS$ is a special anti-ferromagnetic two-state spin system. Similar to $\BIS$, one can also study other
anti-ferromagnetic two-state spin systems on bipartite graphs.
We use a prefix {\scshape Bi-} to emphasize that input graphs are bipartite, and a subscript $\Delta$ to indicate that maximum degree is $\Delta$.
For instance, the problem of $\anti(\beta,\gamma,\mu)$ on bipartite graphs with maximum degree $\Delta$, is denoted shortly by $\bianti_{\Delta}(\beta,\gamma,\mu)$.
 The following theorem from~\cite{cai2013approximating} is the starting point of our reduction.
%Same notation also applies to other problems.

\begin{theorem}[\cite{cai2013approximating}]\label{thm:anti}
  Suppose a set of anti-ferromagnetic parameters $(\beta,\gamma,\mu)$ lies in the non-uniqueness region of the infinite $\Delta$-regular tree $\tree{\Delta}$ and that $\sqrt{\beta \gamma}\ge\frac{\sqrt{\Delta-1}-1}{\sqrt{\Delta-1}+1}$, and $\beta \ne \gamma$ or $\mu\ne 1$.
  Then $\bianti_{\Delta}(\beta,\gamma,\mu)$ is $\BIS$-hard.
\end{theorem}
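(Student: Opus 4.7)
The plan is to establish $\BIS$-hardness by constructing phase-based gadgets on random bipartite $\Delta$-regular graphs, using the non-uniqueness condition on $\tree{\Delta}$ to create long-range correlations that let us encode a $\BIS$ instance into $\bianti_\Delta(\beta,\gamma,\mu)$.

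First, I would analyse the standard tree recursion. On $\tree{\Delta}$ the marginal ratio at the root satisfies
\[
f(x) = \mu\left(\frac{\beta x + 1}{x + \gamma}\right)^{\Delta-1},
\]
and the assumed non-uniqueness means that $f\circ f$ has three non-trivial fixed points; the two stable ones form a 2-periodic orbit of $f$, corresponding to two antiferromagnetic ``phases'' whose marginals alternate between the bipartite classes. The two phases are exchanged by swapping the two sides of the bipartition.

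Second, I would carry out a second-moment analysis of the Gibbs distribution on random bipartite $\Delta$-regular graphs $G=(L\cup R,E)$ with $|L|=|R|=n$. Decompose $Z(G)=\sum_{a,b}Z_{a,b}(G)$ according to the number $a$ of $0$-spins in $L$ and $b$ in $R$. Computing $\mathbb{E}[Z_{a,b}]$ via the configuration model gives an entropy-plus-energy functional $\Phi(a,b)$ whose interior stationary points coincide with fixed points of $f\circ f$. The condition $\sqrt{\beta\gamma}\ge\frac{\sqrt{\Delta-1}-1}{\sqrt{\Delta-1}+1}$ is exactly the threshold making the Hessian of $\Phi$ negative definite at the two phase maxima and eliminating the possibility of dominant ``intermediate'' $(a,b)$; the side condition $\beta\ne\gamma$ or $\mu\ne 1$ excludes the degenerate case in which the two phases collapse. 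A careful variance bound then shows that $Z(G)$ concentrates, with probability $1-o(1)$ over the random graph, around the contributions from the two symmetric phases.

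Third, I would use this two-phase structure to assemble the reduction from $\BIS$. For each vertex $v$ of an input bipartite graph $H=(V_1\cup V_2,E')$, attach a copy of the phase gadget and identify its two dominant phases with the two spin values of $v$; for each edge of $H$ insert a short coupling subgraph that implements the hard-core constraint up to exponentially small error. Since $H$ is bipartite and all gadgets can be chosen to respect the global bipartition while keeping the maximum degree at $\Delta$, the final graph is a valid instance of $\bianti_\Delta(\beta,\gamma,\mu)$, giving an approximation-preserving reduction $\BIS \le_{AP} \bianti_\Delta(\beta,\gamma,\mu)$.

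The main obstacle is the second-moment computation in the second step: showing that the ratio $\mathbb{E}[Z^2]/\mathbb{E}[Z]^2$ stays bounded requires identifying the global maximum of a two-replica free-energy functional and ruling out correlated saddle points, and this is precisely where the tree threshold $\sqrt{\beta\gamma}\ge\frac{\sqrt{\Delta-1}-1}{\sqrt{\Delta-1}+1}$ appears. Derandomising the construction to produce an explicit polynomial-time reduction (rather than one that succeeds only in expectation) is technically delicate, but follows the template developed in Sly's and subsequent hardness proofs, with the extra bipartite symmetry making $\BIS$, rather than an NP-hard problem, the natural target.
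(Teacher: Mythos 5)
The first thing to say is that the paper does not prove this statement at all: Theorem~\ref{thm:anti} is imported verbatim from \cite{cai2013approximating} and used as a black box, so there is no in-paper proof to compare against. Your outline does correctly identify the strategy of that cited work (random bipartite $\Delta$-regular gadgets, phase coexistence driven by tree non-uniqueness, a first- and second-moment analysis in which the condition $\sqrt{\beta\gamma}\ge\frac{\sqrt{\Delta-1}-1}{\sqrt{\Delta-1}+1}$ guarantees that the dominant contributions to $\mathbb{E}[Z]$ sit at the tree fixed points), so the approach is not wrong.

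However, as a proof the proposal has genuine gaps precisely where the content of the theorem lies. First, the second-moment computation is only described: the claim that the stated condition on $\sqrt{\beta\gamma}$ is ``exactly the threshold making the Hessian of $\Phi$ negative definite'' is asserted, not derived, and ruling out correlated saddle points of the two-replica functional is the bulk of the work in \cite{cai2013approximating}. Second, the final reduction step is underspecified in a way that matters. For $\BIS$-hardness you need an approximation-preserving reduction, i.e.\ you must recover the \emph{number} of independent sets of $H$ (up to the required multiplicative precision) from the partition function of the constructed instance; it is not enough that the two phases ``encode'' spin values. This forces you to control the partition function of the gadget conditioned on each phase, to show the inter-gadget couplings multiply out to the hard-core edge weights up to a computable factor, and to handle the fact that the gadget only works with probability $1-o(1)$ over the random graph (which is admissible in an AP-reduction, but must be stated). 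Third, the degree bookkeeping is glossed over: the vertices of a $\Delta$-regular bipartite gadget already have degree $\Delta$, so ``inserting a short coupling subgraph for each edge of $H$ while keeping the maximum degree at $\Delta$'' fails as stated; the actual construction reserves lower-degree port vertices on each gadget for the external connections. None of these is a fatal conceptual error, but each is a step you would have to execute, and together they constitute essentially the whole proof of the cited theorem.
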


For simplicity, we use the special anti-ferromagnetic Ising model $\beta=\gamma<1$ in our reduction,
where the non-uniqueness condition is easy to state.

\begin{proposition}
  \label{thm:critical}
  If $\beta<\frac{\Delta-1}{\Delta+1}$, then there exists a critical activity $\mu_c(\beta,\Delta)> 1$ such that the Gibbes measure of Ising model $(\beta,\beta,\mu)$ on infinite $\Delta$-regular tree $\tree{\Delta}$ is unique if and only if $|\log \mu|\ge \log \mu_c(\beta,\Delta)$.
\end{proposition}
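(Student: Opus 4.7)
The plan is to reduce uniqueness on the infinite tree to a fixed-point analysis of the standard tree recursion for the ratio of marginals at the root. For the homogeneous Ising model $(\beta,\beta,\mu)$ on $\tree{\Delta}$, let $R_v=\Pr[\sigma_v=0]/\Pr[\sigma_v=1]$; a subtree rooted at $v$ with $d$ isomorphic children satisfies
\[
f_\mu(R)=\mu\cdot\left(\frac{\beta R+1}{R+\beta}\right)^{d},
\]
where $d$ is the branching factor attached to the convention used for $\tree{\Delta}$ (so that the critical exponent matches $\frac{\Delta-1}{\Delta+1}$). A classical characterization (going back to Spitzer and Kesten--Stigum, as used in \cite{Weitz06,SST}) states that the Gibbs measure on $\tree{\Delta}$ is unique if and only if $f_\mu\circ f_\mu$ has a unique fixed point in $(0,\infty)$. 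My strategy is to (i) show non-uniqueness at $\mu=1$ under the hypothesis, (ii) use continuity in $\mu$ to produce a critical $\mu_c>1$, and (iii) use the Ising symmetry to symmetrize around $\mu=1$.

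For step (i), at $\mu=1$ the point $R=1$ is a fixed point of $f_1$, and a direct computation gives $f_1'(1)=-d\cdot\frac{1-\beta}{1+\beta}$. The hypothesis $\beta<\frac{\Delta-1}{\Delta+1}$ is exactly the condition $|f_1'(1)|>1$, so $R=1$ is a repelling fixed point. Since $f_1$ is strictly decreasing, $f_1\circ f_1$ is strictly increasing, and the usual convexity/monotonicity argument (the derivative of $f_1\circ f_1$ exceeds $1$ at the middle fixed point and tends to $0$ at the two boundaries) yields exactly three fixed points $R_-<1<R_+$ with $f_1(R_-)=R_+$ and $f_1(R_+)=R_-$. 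This gives non-uniqueness at $\mu=1$.

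For step (ii), I track the fixed points of $g_\mu:=f_\mu\circ f_\mu$ as $\mu$ increases from $1$. The curve $\mu\mapsto g_\mu$ is real-analytic, and the fixed points vary continuously. As $\mu\to\infty$, $f_\mu(R)\to\infty$ uniformly on compact subsets of $(0,\infty)$, so the small fixed points must disappear; since $g_\mu$ is a monotone-increasing smooth function of $R$, the only way fixed points can disappear is through a saddle-node bifurcation where two of them collide. Define
\[
\mu_c:=\sup\{\mu\ge 1 : g_\mu\text{ has more than one fixed point in }(0,\infty)\}.
\]
This $\mu_c$ is finite and strictly greater than $1$ (since the non-uniqueness regime at $\mu=1$ is open in $\mu$), and for $\mu>\mu_c$ the recursion has a unique (hence attracting, by the intermediate-value argument) fixed point.

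For step (iii), the $0\leftrightarrow 1$ symmetry of the Ising model sends $(\beta,\beta,\mu)$ to $(\beta,\beta,1/\mu)$, so the uniqueness region is invariant under $\mu\mapsto 1/\mu$. Combining with step (ii) gives the claimed two-sided threshold: uniqueness holds iff $|\log\mu|\ge\log\mu_c(\beta,\Delta)$. The main potential obstacle is showing in step (ii) that the three-to-one transition occurs at a \emph{single} $\mu_c$ (no re-entrant non-uniqueness); this is handled by observing that the middle fixed point of $g_\mu$ is repelling throughout, and a monotone bifurcation analysis of the two outer fixed points rules out additional crossings. Alternatively, one may quote the well-known monotonicity of the Ising uniqueness region in external field on trees to shortcut this technical step.
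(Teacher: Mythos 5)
The paper itself does not prove this proposition: it declares it folklore and points to \cite{SST}, so there is no paper proof to compare against. Your sketch is a plausible route to the same statement, and you are right to flag the convention issue: with ``$\Delta$-regular tree'' read literally (branching factor $\Delta-1$), the threshold would be $\beta<\frac{\Delta-2}{\Delta}$, not $\frac{\Delta-1}{\Delta+1}$; the stated exponent matches branching factor $\Delta$, and you correctly choose $d$ so that the numbers line up with what the paper (and its Corollary~\ref{cor:hardness}) actually use. Step~(i) (computing $f_1'(1)=-d\frac{1-\beta}{1+\beta}$ and identifying $|f_1'(1)|>1$ with the hypothesis) is correct, as is step~(iii) (the $\mu\mapsto 1/\mu$ Ising symmetry).

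The soft spot is step~(ii). Asserting that fixed points of $g_\mu=f_\mu\circ f_\mu$ can only disappear via a saddle-node collision and that there is no re-entrant non-uniqueness is precisely the content that has to be proved; ``quote monotonicity of the uniqueness region in external field'' is circular unless you supply it. The cleaner and essentially standard route bypasses $g_\mu$ altogether: $f_\mu$ is strictly decreasing, hence has a unique fixed point $\hat x(\mu)$, and uniqueness of the Gibbs measure on the tree is equivalent to $\bigl|f_\mu'(\hat x(\mu))\bigr|\le 1$. Now $\bigl|f_\mu'(\hat x)\bigr|=\frac{d(1-\beta^2)\hat x}{(\hat x+\beta)(\beta\hat x+1)}$ is a unimodal function of $\hat x$ with its maximum $d\frac{1-\beta}{1+\beta}$ at $\hat x=1$, and $\mu\mapsto\hat x(\mu)$ is strictly increasing with $\hat x(1)=1$. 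Hence the set of $\mu$ with $|f_\mu'(\hat x(\mu))|>1$ is exactly a symmetric (in $\log\mu$) open interval around $\mu=1$, nonempty iff $\beta<\frac{d-1}{d+1}$, which gives the threshold $\mu_c>1$ and the ``iff $|\log\mu|\ge\log\mu_c$'' form immediately and with no bifurcation analysis. If you want to keep your $g_\mu$ formulation, you should at least reduce it to this fixed-point criterion rather than appeal to a generic bifurcation picture.
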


Proposition \ref{thm:critical} is folklore, a proof can be found in, e.g. \cite{SST}. Combining these two results, we can get

\begin{corollary}\label{cor:hardness}
For all $0<\beta<1$,
there is an $\eps>0$  such that for any $\mu \in (1, 1+\eps)$, $\bianti_{\Delta}(\beta,\beta,\mu)$ is $\BIS$-hard, where $\Delta=\lfloor\frac{1+\beta}{1-\beta}\rfloor+1$.
\end{corollary}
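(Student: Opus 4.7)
The plan is to obtain the corollary as a direct combination of Proposition~\ref{thm:critical} and Theorem~\ref{thm:anti}, applied with $\gamma = \beta$ and $\Delta = \lfloor (1+\beta)/(1-\beta) \rfloor + 1$. The only real work is to check that this $\Delta$ simultaneously satisfies the two competing bounds: (a) $\beta < (\Delta-1)/(\Delta+1)$, needed to invoke Proposition~\ref{thm:critical}, and (b) $\beta = \sqrt{\beta\cdot\beta} \geq (\sqrt{\Delta-1}-1)/(\sqrt{\Delta-1}+1)$, needed for the soft-constraint hypothesis of Theorem~\ref{thm:anti}. Note that condition (a) is a genuine upper bound on $\beta$ while (b) is a lower bound, so the floor-based choice of $\Delta$ has to thread a narrow window.

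First I would dispatch (a). The definition of $\Delta$ gives $\Delta > (1+\beta)/(1-\beta)$, which rearranges to $\beta(\Delta+1) < \Delta - 1$, i.e., exactly (a). Proposition~\ref{thm:critical} then provides a critical activity $\mu_c = \mu_c(\beta,\Delta) > 1$ such that the Gibbs measure on $\tree{\Delta}$ is non-unique precisely when $\mu \in (1/\mu_c, \mu_c)$. I would set $\eps := \mu_c - 1 > 0$, so that every $\mu \in (1, 1+\eps)$ lies in the non-uniqueness region and automatically satisfies $\mu \neq 1$, fulfilling the ``$\beta \neq \gamma$ or $\mu \neq 1$'' clause of Theorem~\ref{thm:anti}.

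For (b), the other side of the floor gives $\Delta - 1 \leq (1+\beta)/(1-\beta)$. Taking square roots and rationalizing,
\[
  \frac{\sqrt{\Delta-1}-1}{\sqrt{\Delta-1}+1} \;\leq\; \frac{\sqrt{1+\beta} - \sqrt{1-\beta}}{\sqrt{1+\beta} + \sqrt{1-\beta}} \;=\; \frac{1 - \sqrt{1-\beta^2}}{\beta},
\]
and $(1 - \sqrt{1-\beta^2})/\beta \leq \beta$ reduces to the trivial $\sqrt{1-\beta^2} \leq 1$. With both (a) and (b) in hand, Theorem~\ref{thm:anti} applies and yields $\BIS$-hardness of $\bianti_\Delta(\beta,\beta,\mu)$ for every $\mu \in (1, 1+\eps)$. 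There is no substantive obstacle beyond this algebraic bookkeeping; the whole point is that the particular floor-based $\Delta$ in the corollary is tuned precisely to land in the overlap of the non-uniqueness range of Proposition~\ref{thm:critical} and the parameter window of Theorem~\ref{thm:anti}.
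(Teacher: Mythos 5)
Your proposal is correct and follows essentially the same route as the paper: choose $\eps=\mu_c(\beta,\Delta)-1$ from Proposition~\ref{thm:critical} and verify the two degree conditions needed by Theorem~\ref{thm:anti}. The only cosmetic difference is in checking condition (b): you manipulate $\Delta-1\le(1+\beta)/(1-\beta)$ directly via the monotonicity of $x\mapsto(\sqrt x-1)/(\sqrt x+1)$, whereas the paper restates the floor inequality as $\beta\ge(\Delta-2)/\Delta$ and compares that to $(\sqrt{\Delta-1}-1)/(\sqrt{\Delta-1}+1)$; the two derivations are equivalent.
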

\begin{proof}
As $\Delta=\lfloor\frac{1+\beta}{1-\beta}\rfloor+1$, we know that $\beta<\frac{\Delta-1}{\Delta+1}$. Then by Proposition \ref{thm:critical}, we can choose $\eps=\mu_c(\beta,\Delta)-1$ and get that
 $(\beta,\beta,\mu)$  is in the non-uniqueness region of the infinite $\Delta$-regular tree $\tree{\Delta}$ for all $\mu \in (1, 1+\eps)$. In order to make use of Theorem \ref{thm:anti} and conclude our proof, we only need to verify that $\beta \geq \frac{\sqrt{\Delta-1}-1}{\sqrt{\Delta-1}+1}$. Our choice of $\Delta$ is the smallest integer to satisfy $\beta<\frac{\Delta-1}{\Delta+1}$. As a result, we have
$\beta\geq \frac{\Delta-1-1}{\Delta-1+1} \geq \frac{\sqrt{\Delta-1}-1}{\sqrt{\Delta-1}+1}$.
\qed
\end{proof}

\subsection{The Reduction}

\begin{lemma}\label{lem:reduction}
For any $\beta<\gamma$ with $\beta \gamma>1$, $\mu>1$ and integer $\Delta>1$, we have
\[ \bianti_{\Delta}\left(\frac{1}{\sqrt{\beta \gamma}},\frac{1}{\sqrt{\beta \gamma}}, \mu\right) \le_{AP} \biferro_{\Delta}\left(\beta,\gamma,\left[\frac{1}{\mu}\sqrt{\frac{\gamma}{\beta}}, \mu \left(\sqrt{\frac{\gamma}{\beta}}\right)^\Delta\right]\right) .\]
\end{lemma}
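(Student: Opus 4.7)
The plan is to exploit bipartiteness by composing a spin-flip on one side of $G$ with a diagonal (holographic) transformation: the ferromagnetic edge interaction then becomes, up to a scalar, the antiferromagnetic Ising interaction with parameter $1/\sqrt{\beta\gamma}$, at the price of degree-dependent external fields that still fit inside the interval claimed in the lemma.

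Concretely, given an instance $G=(L\cup R,E)$ of $\bianti_{\Delta}\bigl(\tfrac{1}{\sqrt{\beta\gamma}},\tfrac{1}{\sqrt{\beta\gamma}},\mu\bigr)$, I would first discard any isolated vertices (each contributes the trivial factor $\mu+1$) and then feed the resulting graph to the ferromagnetic oracle with vertex-specific external fields
\[
\mu_u := \mu\bigl(\sqrt{\gamma/\beta}\bigr)^{\deg(u)}\ \text{for } u\in L,\qquad \mu_v := \bigl(\sqrt{\gamma/\beta}\bigr)^{\deg(v)}/\mu\ \text{for } v\in R.
\]
Because each remaining degree lies in $\{1,\ldots,\Delta\}$, both quantities fall inside the target interval $\bigl[\tfrac{1}{\mu}\sqrt{\gamma/\beta},\,\mu\bigl(\sqrt{\gamma/\beta}\bigr)^\Delta\bigr]$.

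To justify the reduction I would then prove an identity $Z_F(G)=\Lambda\cdot Z_A(G)$ with an explicit, polynomial-time computable $\Lambda$. Re-encode each configuration by flipping spins on $R$; the edge matrix becomes $\tilde A=\begin{bmatrix}1&\beta\\\gamma&1\end{bmatrix}$ and the two $R$-vertex weights swap. Next absorb scalars $c_0=\sqrt{\gamma/\beta},\,c_1=1$ on each $L$-half-edge and $d_0=1,\,d_1=\sqrt{\gamma/\beta}$ on each $R$-half-edge. A direct $2\times 2$ check shows that $c_i d_j\,\tilde A_{i,j}$ equals $\gamma$ times the antiferromagnetic Ising matrix with parameter $1/\sqrt{\beta\gamma}$, while the compensating factors $c_{\sigma_u}^{-\deg(u)}$ and $d_{\sigma_v}^{-\deg(v)}$, combined with the above choice of $\mu_u,\mu_v$, renormalize every vertex to the uniform activity $(\mu,1)$. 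Collecting all scalars that were pulled out (a power of $\gamma$ per edge, plus the $R$-vertex normalizing factors) produces the desired $\Lambda$, so approximating $Z_F$ approximates $Z_A$ with essentially the same relative accuracy.

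The only substantive obstacle is the bookkeeping of this holographic rescaling: one must verify that the exponent of $\sqrt{\gamma/\beta}$ attached to each vertex in the ferromagnetic instance is exactly its degree (so that $\mu_u$ and $\mu_v$ actually stay in the prescribed interval), and that the $R$-side spin-flip correctly turns the ferromagnetic matrix into the form whose holographic symmetrization is Ising with parameter $1/\sqrt{\beta\gamma}$. Once these two ingredients are in place, the identity $Z_F=\Lambda\cdot Z_A$ is an immediate algebraic consequence and yields the claimed AP-reduction.
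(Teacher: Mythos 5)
Your proposal is correct and follows essentially the same route as the paper: assign degree-weighted external fields $\mu(\sqrt{\gamma/\beta})^{d_u}$ on $L$ and $\tfrac{1}{\mu}(\sqrt{\gamma/\beta})^{d_v}$ on $R$, flip spins on $R$, and apply the diagonal holographic scaling with $c_0=d_1=\sqrt{\gamma/\beta}$, $c_1=d_0=1$, which turns the edge matrix into $\gamma$ times the antiferromagnetic Ising matrix while the degree powers cancel against the vertex weights; collecting the factors of $\gamma$ per edge and $\mu^{-1}$ per $R$-vertex gives the explicit $\Lambda$. The only addition you make is the (reasonable, and technically needed) remark that isolated vertices should be stripped first, a corner case the paper's write-up does not explicitly address.
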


\begin{proof}

Let bipartite graph $G(L\cup R,E)$ be an instance of anti-ferromagnetic Ising $\left(\frac{1}{\sqrt{\beta \gamma}},\frac{1}{\sqrt{\beta \gamma}}, \mu\right)$ with maximum degree $\Delta$.
We construct an instance of  ferromagnetic system with exactly the same graph.
Each vertex $u \in L$ with degree $d_u$ has weight $\mu\left(\sqrt{\frac{\gamma}{\beta}}\right)^{d_u}$,
and each vertex $v\in R$ has weight $\frac{1}{\mu} \left(\sqrt{\frac{\gamma}{\beta}}\right)^{d_v}$.
Then the maximum possible external field is $\mu \left(\sqrt{\frac{\gamma}{\beta}}\right)^\Delta$ while the minimum one is $\frac{1}{\mu}\sqrt{\frac{\gamma}{\beta}}$. Therefore, it is indeed an instance of
$\biferro_{\Delta}\left(\beta,\gamma,\left[\frac{1}{\mu}\sqrt{\frac{\gamma}{\beta}}, \mu \left(\sqrt{\frac{\gamma}{\beta}}\right)^\Delta\right]\right)$.

Let $Z_1(G)$ be the partition function of the anti-ferromagnetic Ising instance, and $Z_2(G)$ be that for the ferromagnetic system.
We shall prove  that $Z_1(G) = \gamma^{-|F|}\mu^{|R|} Z_2(G)$. Let $V \triangleq L \cup R$, $
A=\left[
  \begin{array}{cc}
    \frac{1}{\sqrt{\beta \gamma}} & 1\\
    1 & \frac{1}{\sqrt{\beta \gamma}}
  \end{array}
\right]
$,
$
A'=\left[
  \begin{array}{cc}
    \sqrt{\frac{\gamma}{\beta}} & \gamma\\
    \gamma &\sqrt{\frac{\gamma}{\beta}}
  \end{array}
\right]
$,
$
\hat A'=\left[
  \begin{array}{cc}
    1 & \beta\\
    \gamma & 1
  \end{array}
\right]
$ and
$
\hat A=\left[
  \begin{array}{cc}
    \beta & 1\\
    1 & \gamma\\
  \end{array}
\right]
$. Then
\begin{align*}
 Z_2(G)
  &=\sum_{\sigma\in \{0,1\}^V}\prod_{(u,v)\in E}\hat A_{\sigma_u,\sigma_v}\prod_{u\in L}\left(\mu\left(\sqrt{\frac{\gamma}{\beta}}\right)^{d_u}\right)^{1-\sigma_u}\prod_{v\in R} \left( \frac{1}{\mu} \left(\sqrt{\frac{\gamma}{\beta}}\right)^{d_v}\right)^{1-\sigma_v}\\
  &=\sum_{\sigma\in \{0,1\}^{V}}\prod_{(u,v)\in E}\hat A'_{\sigma_u,\sigma_v}\prod_{u\in L}\left(\mu\left(\sqrt{\frac{\gamma}{\beta}}\right)^{d_u}\right)^{1-\sigma_u}\prod_{v\in R} \left( \frac{1}{\mu} \left(\sqrt{\frac{\gamma}{\beta}}\right)^{d_v}\right)^{\sigma_v}\\
  &=\sum_{\sigma\in \{0,1\}^{V}}\prod_{(u,v)\in E}A'_{\sigma_u,\sigma_v}\prod_{u\in L}\mu^{1-\sigma_u}\prod_{v\in R}  \frac{1}{\mu^{\sigma_v}}\\
  &=\mu^{-\abs{R}} \gamma^{\abs{F}} \sum_{\sigma\in \{0,1\}^{V}}\prod_{(u,v)\in E}A_{\sigma_u,\sigma_v}\prod_{u\in L}\mu^{1-\sigma_u}\prod_{v\in R} \mu^{1-\sigma_v}\\
  &=\mu^{-\abs{R}} \gamma^{\abs{F}} Z_{1}(G).
\end{align*}
Thus we can get an approximation for the anti-ferromagnetic Ising model by an oracle call to the ferromagnetic two-spin system. This concludes the proof.
\qed
\end{proof}

Now, given the target $\mu>\left(\sqrt{\frac{\gamma}{\beta}}\right)^\Delta $ in Theorem \ref{thm:existence},
we simply choose a $\mu'$ close enough to $1$ in Lemma \ref{lem:reduction} and Corollary \ref{cor:hardness}, such that
$\left[\frac{1}{\mu'}\sqrt{\frac{\gamma}{\beta}}, \mu' \left(\sqrt{\frac{\gamma}{\beta}}\right)^\Delta\right] \subseteq [1, \mu]$ and
$\BIS \leq_{AP} \bianti_{\Delta}\left(\frac{1}{\sqrt{\beta \gamma}},\frac{1}{\sqrt{\beta \gamma}}, \mu'\right) $. Then we can conclude that
$\BIS \leq_{AP} \biferro_{\Delta}(\beta,\gamma, [1, \mu])$ and finish the proof of Theorem \ref{thm:existence}.

\section{Uniform Local Field}
We establish Theorem \ref{thm:main-thm} in this section.
If $\beta>1$, then one can use external field of $\mu>\frac{\gamma-1}{\beta-1}$ to simulate any external fields and get the $\BIS$-hardness result.
This follows from a similar argument as that in~\cite{goldberg2007complexity}. To be self-contained, we also include a formal proof in the appendix. So we assume $\beta\le 1$ in this section.
We also introduce a function $h(x)=\frac{\beta x+1}{x+\gamma}$ which is used throughout this section.
Note that since $\beta \gamma > 1$, $h(x)$ is monotonically increasing and $\frac{1}{\gamma} < h(x) < \beta\le 1$ for $x \in (0,+\infty)$.
We shall prove the following key reduction.

\begin{lemma}\label{lemma:key-reduction}
  Let $\beta\leq 1, \beta \gamma >1$, $d$ be an integer such that $\beta (\beta \gamma)^d > 1$,  $\mu^*$ be the largest solution of $x$ to $x=\mu h(x)^d$, and $\mu > \frac{\gamma^d(\beta \gamma -1 )}{\beta} \left(1+\frac{d+1}{ \ln \left(\beta (\beta \gamma)^d\right)}\right)$. Then $\ferro\left(\beta,\gamma,\left[1, \mu^*\right]\right)$ $\le_{AP}\ferro(\beta,\gamma,\mu)$.
\end{lemma}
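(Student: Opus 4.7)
The plan is to construct a family of tree gadgets that, when rooted at a single vertex, simulate any desired external field in $[1,\mu^*]$ to arbitrary precision using only the uniform field $\mu$; this reverses the usual tree-recursion perspective of correlation-decay algorithms. Consider a $d$-ary rooted tree in which every internal node carries external field $\mu$, and suppose each leaf ``advertises'' a ratio $Z_0/Z_1$ equal to $x$; then the standard two-spin tree recursion propagates $x$ upward via $y\mapsto \mu h(y)^d$, exactly the recursion whose largest fixed point is $\mu^*$. Write $f(y):=\mu h(y)^d$ for this map.

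The first step is to analyze the dynamics of $f$. Since $h$ is increasing with $h((0,\infty))\subseteq(1/\gamma,\beta]$, the map $f$ is monotone and takes values in $(\mu/\gamma^d,\mu\beta^d]$. Under the hypothesis $\beta(\beta\gamma)^d>1$ together with the explicit lower bound on $\mu$, I would show that (i) the equation $y=f(y)$ has a largest root $\mu^*>1$, and (ii) $f$ is a genuine contraction on a neighbourhood of $\mu^*$, in fact uniformly on all of $[1,\mu^*]$. Writing $f'(y)=\tfrac{d\,f(y)\,h'(y)}{h(y)}$ and evaluating at $\mu^*$, the contraction rate comes out to a quantity that is strictly less than $1$ precisely when $\mu$ exceeds the threshold in the statement; in other words, the explicit bound on $\mu$ in the lemma is engineered exactly to produce $\sup_{y\in[1,\mu^*]}|f'(y)|<1$.

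Given any target field $\tau\in[1,\mu^*]$, define the backward orbit $\tau_0=\tau$, $\tau_{k+1}=f^{-1}(\tau_k)$, which by monotonicity lies inside $[1,\mu^*]$ and converges geometrically to $\mu^*$. The gadget for $\tau$ is a depth-$T$ tree whose internal node at depth $k$ sits at the target ratio $\tau_{T-k}$ (obtained by giving it $d$ identical children, each realizing $\tau_{T-k+1}$), while every leaf is replaced by a constant-size base gadget realizing a ratio near $\mu^*$, such as a single vertex with field $\mu$ or a short chain thereof. Since $f$ is a $\lambda$-contraction with $\lambda<1$, any constant multiplicative error at the leaves decays by a factor $\lambda$ per level, so a depth $T=O(\log(n/\eps))$ brings the root ratio within a $(1\pm\eps/n)$ factor of $\tau$; this is the standard correlation-decay calculation, used here in reverse.

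With the gadget in hand, the AP-reduction itself is routine: an instance $G$ of $\ferro(\beta,\gamma,[1,\mu^*])$ in which vertex $v$ has field $\tau_v$ is rewritten as an instance $G'$ of $\ferro(\beta,\gamma,\mu)$ by attaching the gadget for $\tau_v$ at each $v$; the two partition functions differ by an explicit, efficiently computable factor coming from the gadget weights, and the aggregate $(1\pm\eps/n)^n \approx (1\pm\eps)$ error sits within the tolerance of an AP-reduction. The hard part, and the whole technical content of the lemma, is the quantitative contraction step: one must verify that $|f'(y)|<1$ holds uniformly along the entire backward orbit (not merely at $\mu^*$) and that this bound interacts correctly with the constant leaf error of the base gadget, so that the exponential shrinkage over $T$ levels actually compensates for the accumulated $n$-fold replacement. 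The closed-form bound $\mu>\tfrac{\gamma^d(\beta\gamma-1)}{\beta}\bigl(1+\tfrac{d+1}{\ln(\beta(\beta\gamma)^d)}\bigr)$ is precisely the explicit sufficient condition under which this contraction estimate can be written down cleanly.
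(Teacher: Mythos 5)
There is a genuine gap, and it sits at the heart of the gadget design. You propose a uniform $d$-ary tree with all $d$ children of every node identical, backed by a backward orbit $\tau_0=\tau$, $\tau_{k+1}=f^{-1}(\tau_k)$ of the map $f(y)=\mu h(y)^d$, together with the claim that this orbit ``converges geometrically to $\mu^*$.'' This is false. The paper explicitly notes $0<f'(\mu^*)<1$, i.e.\ $\mu^*$ is an \emph{attracting} fixed point of $f$; consequently backward iterates move \emph{away} from $\mu^*$, not toward it. Concretely, near $\mu^*$ one has $f(y)>y$ for $y<\mu^*$, so $f^{-1}(y)<y$ and the sequence $\tau_k$ decreases, eventually exiting the range $(\mu/\gamma^d,\mu\beta^d)$ of $f$ where $f^{-1}$ is even defined. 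Put differently, a uniform $d$-ary tree in which every subtree is a copy of the gadget one level down can only realize values near $\mu^*$: iterating $f$ forward from any leaf value in the basin always collapses to $\mu^*$. Such a gadget cannot hit an arbitrary target $\tau\in[1,\mu^*]$. (The suggested leaf gadget also misses: a single vertex with field $\mu$ realizes $\mu$, and the paper's Proposition gives $\mu^*<\beta^d\mu\le\mu$, so $\mu$ is not close to $\mu^*$; the paper uses the $d$-ary tree $\mathcal{T}_t$ precisely to approximate $\mu^*$.)

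The paper's actual construction avoids this by breaking the symmetry: at each recursion level the root of the gadget is a $\mathtt{comb}$ of $k$ singletons $\mathcal{S}_0$, $d-1$ \emph{basic} gadgets (each either a star $\mathcal{S}_w$ realizing a value near $0$ or a $d$-ary tree $\mathcal{T}_t$ realizing a value near $\mu^*$), and exactly \emph{one} recursive sub-gadget. The target thus decomposes as $\hat\mu=\mu h(\mu)^k\prod_{i=1}^{d-1}h(y_i)\cdot h(\hat\mu')$, with $y_i\in\{0,\mu^*\}$ fixed and only $\hat\mu'$ recursed on. That turns the correlation-decay estimate into a genuinely one-variable bound: the relevant derivative is $\frac{x\,h'(x)}{h(x)}=\frac{(\beta\gamma-1)x}{(x+\gamma)(\beta x+1)}\le\frac{\sqrt{\beta\gamma}-1}{\sqrt{\beta\gamma}+1}=\alpha<1$, which depends on $\beta\gamma$ alone, not on $\mu$. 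This also exposes a misattribution in your write-up: the lower bound on $\mu$ in the lemma statement is \emph{not} ``engineered to make $|f'|<1$.'' It is what makes the recursion \emph{solvable} at every level -- Proposition~\ref{prop:init-sol} uses it to guarantee that whenever $\mu^*h(\mu)<\mu_1\le\mu^*$, the equation $\mu h(x)^d=\mu_1$ has a root in $(0,\mu^*]$, which is the loop invariant that keeps the recursively constructed target inside the simulable range. The outer AP-reduction framework you describe (attach a gadget per vertex, control multiplicative error to $1\pm\eps/n$, absorb the global scaling factor) is correct and matches the paper, but the gadget itself and the reason the $\mu$ threshold appears both need to be replaced.
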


As $\mu^*=\mu h(\mu^*)^d$ and $\frac{1}{\gamma} < h(\mu^*) < \beta$,
we have the following bound for $\mu^*$.
\begin{proposition}
  $\frac{\mu}{\gamma^d} < \mu^*< \beta^d \mu$.
\end{proposition}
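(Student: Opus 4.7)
The plan is to exploit the two properties of $h$ already noted in the paragraph preceding the proposition, namely that $h$ is monotonically increasing on $(0,+\infty)$ and that its range is squeezed between $1/\gamma$ and $\beta$. A one-line differentiation confirms the first property: $h'(x) = (\beta\gamma - 1)/(x+\gamma)^2 > 0$ since $\beta\gamma > 1$. The endpoint values are $h(0^+) = 1/\gamma$ and $\lim_{x \to \infty} h(x) = \beta$, so for every $x \in (0,+\infty)$ we have the strict two-sided bound $1/\gamma < h(x) < \beta$.

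Next, I would apply this bound at the particular point $x = \mu^*$. One first observes that $\mu^* > 0$: any root of $x = \mu h(x)^d$ must be positive because $\mu > 0$ and $h$ is strictly positive on $(0,+\infty)$ (and the equation has no root at $x = 0$ since $\mu h(0)^d = \mu/\gamma^d > 0$). Raising $1/\gamma < h(\mu^*) < \beta$ to the $d$-th power gives $1/\gamma^d < h(\mu^*)^d < \beta^d$, and multiplying through by $\mu$ produces
\[
\frac{\mu}{\gamma^d} \;<\; \mu\, h(\mu^*)^d \;<\; \mu\beta^d.
\]
Finally, substituting the defining relation $\mu^* = \mu h(\mu^*)^d$ into the middle term yields the advertised inequality $\mu/\gamma^d < \mu^* < \mu\beta^d$.

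There is really no obstacle here: the entire argument is a substitution into the known envelope for $h$, and the existence/positivity of $\mu^*$ is guaranteed by the hypothesis of the lemma (which explicitly speaks of the \emph{largest} solution). The only caveat worth flagging in the written version is the strict positivity of $\mu^*$, since without it one could not apply the strict bounds $1/\gamma < h(\cdot) < \beta$, but as noted this is automatic from the fixed-point equation itself.
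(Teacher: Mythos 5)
Your argument is exactly the one the paper uses: the proposition is stated immediately after the observation that $\mu^* = \mu h(\mu^*)^d$ together with $\tfrac{1}{\gamma} < h(\mu^*) < \beta$, and the bound follows by raising to the $d$-th power and multiplying by $\mu$. The extra remarks on monotonicity of $h$ and positivity of $\mu^*$ are sound but merely spell out what the paper takes as already established.
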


With this bound and Lemma \ref{lemma:key-reduction}, we can choose sufficiently large $\mu$ so that this $\mu^*$ is large enough to apply the hardness result (Theorem \ref{thm:existence}) of $\ferro(\beta,\gamma,[1, \mu^*])$ to get the hardness result for $\ferro(\beta,\gamma,\mu)$. Formally, we have

\begin{theorem}
  Let $\beta\leq 1, \beta \gamma >1$, $d$ be an integer such that $\beta (\beta \gamma)^d > 1$, $\Delta= \lfloor \frac{\sqrt{\beta\gamma}+1}{\sqrt{\beta\gamma}-1} \rfloor + 1$, and
  $\mu > \gamma^d \max\set{ \left(\sqrt{\frac{\gamma}{\beta}}\right)^\Delta,  \frac{\beta \gamma -1}{\beta} \left(1+\frac{d+1}{ \ln \left(\beta (\beta \gamma)^d\right)}\right)}$.  Then $\ferro(\beta,\gamma,\mu)$ is $\BIS$-hard.
\end{theorem}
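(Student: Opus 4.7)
The plan is to obtain this theorem as a direct combination of Lemma \ref{lemma:key-reduction} with Theorem \ref{thm:existence}, using the bound $\mu^*>\mu/\gamma^d$ from the Proposition preceding it to chain the two reductions. No new construction is required; the lower bound on $\mu$ in the hypothesis has been engineered to simultaneously satisfy both (i) the precondition of Lemma \ref{lemma:key-reduction}, which guarantees a reduction from the bounded-field problem to the uniform-field problem, and (ii) the precondition of Theorem \ref{thm:existence}, which guarantees that the bounded-field problem itself is $\BIS$-hard once the upper endpoint of the interval is large enough.

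Concretely, first I would split the hypothesis $\mu>\gamma^d\max\{(\sqrt{\gamma/\beta})^\Delta,\;\frac{\beta\gamma-1}{\beta}(1+\frac{d+1}{\ln(\beta(\beta\gamma)^d)})\}$ into its two constituent inequalities. The second one, $\mu>\frac{\gamma^d(\beta\gamma-1)}{\beta}(1+\frac{d+1}{\ln(\beta(\beta\gamma)^d)})$, is exactly the hypothesis of Lemma \ref{lemma:key-reduction}, which together with the hypothesis $\beta(\beta\gamma)^d>1$ yields the approximation-preserving reduction
\[
\ferro(\beta,\gamma,[1,\mu^*])\le_{AP}\ferro(\beta,\gamma,\mu),
\]
where $\mu^*$ is the largest fixed point of $x\mapsto \mu h(x)^d$. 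Next I would invoke the Proposition's lower bound $\mu^*>\mu/\gamma^d$. Combined with the first inequality $\mu>\gamma^d(\sqrt{\gamma/\beta})^\Delta$, this gives
\[
\mu^*>\mu/\gamma^d>(\sqrt{\gamma/\beta})^\Delta,
\]
which is precisely the condition required by Theorem \ref{thm:existence} for $\ferro(\beta,\gamma,[1,\mu^*])$ to be $\BIS$-hard (note also that since $\Delta$ is defined in the same way and $\beta<\gamma$ with $\beta\gamma>1$, the other hypotheses of Theorem \ref{thm:existence} are inherited).

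Composing the reduction from Lemma \ref{lemma:key-reduction} with the $\BIS$-hardness of $\ferro(\beta,\gamma,[1,\mu^*])$ then yields $\BIS$-hardness of $\ferro(\beta,\gamma,\mu)$, completing the proof. There is no real obstacle here; the only thing to verify carefully is that the constant $\mu^*$ produced by Lemma \ref{lemma:key-reduction} is at least the threshold $(\sqrt{\gamma/\beta})^\Delta$ demanded by Theorem \ref{thm:existence}, and the factor $\gamma^d$ in the hypothesis is exactly what makes the Proposition's lower bound on $\mu^*$ large enough to cross that threshold. The substantive mathematical content of the theorem is entirely contained in Lemma \ref{lemma:key-reduction} (the gadget construction using reverse correlation decay) and Theorem \ref{thm:existence} (the reduction from anti-ferromagnetic systems on bipartite graphs); this final theorem is essentially a bookkeeping corollary that records how to choose $\mu$ so that both machines can be engaged simultaneously.
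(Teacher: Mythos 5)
Your proposal matches the paper's own argument exactly: the paper derives this theorem by combining Lemma \ref{lemma:key-reduction} with the bound $\mu^* > \mu/\gamma^d$ and Theorem \ref{thm:existence}, splitting the lower bound on $\mu$ into the two pieces that trigger each ingredient, precisely as you describe. Your verification that the hypotheses of Theorem \ref{thm:existence} are inherited (in particular $\beta < \gamma$ follows from $\beta \le 1 < \gamma$) is correct, and no further argument is needed.
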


We remark that there always exists such integer $d$ since $\beta>0$ and  $\beta \gamma >1$.
Different $d$s give different bounds for $\mu$ and it is not necessarily monotone.
For a given $\beta,\gamma$, one can choose a suitable $d$ to get the best bound\footnote{We give one numerical example here to get some idea of this bound: if $\beta=1$ and $\gamma=2$, we can get $\Delta=6$ and choose $d=1$; then the theorem tell us that the problem $\ferro(1,2,\mu)$ is $\BIS$-hard if $\mu>12$.}.

In the remaining of this section, we prove the key reduction stated in Lemma \ref{lemma:key-reduction}. The main idea is to simulate any external field in $[1,\mu^*]$ by a vertex weight gadget.
In the first subsection, we state the general framework of such simulation. %simulating an external field by a vertex weight gadget.
Then in the second subsection, we present the detailed construction of a gadget.

\subsection{Vertex Weight Gadget}
\begin{definition}[Vertex weight gadget]
  Let $G(V,E)$ be a graph with a special output vertex $v^*$, define $\mu(G)=\frac{Z_G(v^*=0)}{Z_G(v^*=1)}$ where $Z_G(v^*=0)$ (resp. $Z_G(v^*=1)$) is the partition function of $G(V,E)$ in $(\beta,\gamma,\mu)$-system conditioned on $v^*=0$ (resp. $v^*=1$). We call $G$ a {\rm vertex weight gadget} that realizes $\mu(G)$.
\end{definition}

We also use a family of graphs to approach a given external field. Let $\{G_i\}_{i\ge 1}$ be a family of vertex weight gadgets. We say $\{G_i\}$ realizes $\mu$ if $\lim_{i\to\infty}\mu(G_i)=\mu$.

Vertex weight gadgets can be used to  simulate external fields. Formally, we have the following reductions.

\begin{lemma}
  Let $G$ be a vertex weight gadget  of $(\beta, \gamma, \mathcal{V})$. Then $\spin(\beta, \gamma, \mathcal{V} \cup \{\mu(G) \}) \le_{AP} \spin(\beta, \gamma, \mathcal{V}) $.

  Let $\{G_i\}$ be a sequence of vertex weight gadget of $(\beta, \gamma, \mathcal{V})$ to realize $\mu$ such that for any $\eps >0$ there is a $G_i$ of size $poly\left(\eps^{-1}\right)$ with $\exp (-\eps) \leq \frac{\mu(G_i)}{\mu}\leq \exp (\eps)$. Then  $\spin\left(\beta, \gamma, \mathcal{V} \cup \{\mu \}\right) \le_{AP}\spin(\beta, \gamma, \mathcal{V})  $.
\end{lemma}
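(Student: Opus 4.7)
The first assertion is a standard gadget substitution. Given an instance $H\in\spin(\beta,\gamma,\mathcal{V}\cup\set{\mu(G)})$, let $V_\mu\subseteq V(H)$ be the vertices carrying external field $\mu(G)$, with $n=\abs{V_\mu}$. For each $v\in V_\mu$, attach a fresh copy of $G$ by identifying $v$ with its output vertex $v^*$ and delete the original external field at $v$; call the result $H'\in\spin(\beta,\gamma,\mathcal{V})$. Summing out the interior variables of each attached copy, conditioned on the shared boundary spin $\sigma_v$, produces the factor $Z_G(v^*=\sigma_v)=c\cdot\mu(G)^{1-\sigma_v}$ with $c:=Z_G(v^*=1)$, which is exactly $c$ times the external-field factor we removed. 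Consequently $Z(H')=c^n\cdot Z(H)$. Since $G$ is a fixed finite gadget, $c$ is an explicitly computable constant; invoking the oracle on $H'$ with accuracy $\eps$ and dividing by $c^n$ delivers the AP reduction.

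For the second assertion, I replicate the substitution with a sufficiently accurate member of the family. Given $H$ and target accuracy $\eps$, put $n=\abs{V_\mu}$ and pick $G=G_i$ with $\exp(-\eps/(3n))\le\mu(G)/\mu\le\exp(\eps/(3n))$; by hypothesis $\abs{G}=\mathrm{poly}(n\eps^{-1})$. Let $\hat H$ be the graph $H$ with each external field $\mu$ at $V_\mu$ replaced by $\mu(G)$, and let $H'$ be the gadgeted version. The Part~1 identity applies to give $Z(H')=c^n\cdot Z(\hat H)$ with $c=Z_G(v^*=1)$. Comparing $Z(\hat H)$ and $Z(H)$ term-by-term over configurations $\sigma$, the ratio $Z(\hat H)/Z(H)$ is a convex combination of factors $(\mu(G)/\mu)^{k(\sigma)}$ with $0\le k(\sigma)\le n$, hence lies in $[\exp(-\eps/3),\exp(\eps/3)]$. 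Thus $Z(H)=Z(H')\cdot c^{-n}\cdot\exp(\pm\eps/3)$.

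The main technical step that remains is estimating $c^n$ within $\exp(\pm\eps/3)$ in polynomial time. For the concrete tree-shaped gadgets built in the next subsection, $Z_G$ and $\mu(G)$ are both computable exactly in time $\mathrm{poly}(\abs{G})$ by the standard tree recursion, so $c=Z_G/(1+\mu(G))$ comes out exactly and $c^n$ incurs no approximation error. More generally, $Z_G$ can be approximated by an additional oracle call on $G$, and $\mu(G)$ by a pair of oracle calls with $v^*$ effectively pinned, each at accuracy $\exp(\pm\eps/(6n))$; this propagates to $c^n$ within $\exp(\pm\eps/3)$. Combined with an oracle estimate of $Z(H')$ at accuracy $\eps/3$, we obtain $Z(H)$ within $\exp(\pm\eps)$, finishing the reduction.
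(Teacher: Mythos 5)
Your proof is correct and follows essentially the same gadget-substitution approach as the paper: attach $G$ at each vertex carrying $\mu(G)$, with the scaling identity $Z(H')=Z_G(v^*=1)^n\cdot Z(H)$ being exactly the paper's factor $\left(\frac{Z(G)}{1+\mu(G)}\right)^j$. You are a bit more explicit than the paper's terse argument about the error bookkeeping in the second part and about the need for the scaling factor $c^n$ to be efficiently computable (true for the tree gadgets the paper actually builds), a point the paper leaves implicit.
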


\begin{proof}
  The proof of the first part is straightforward. For any instance $H$ of $\spin(\beta, \gamma, \mathcal{V} \cup \{\mu(G) \})$ and a vertex of $H$ with external field $\mu(G)$, we use one copy of $G$ and identify the output vertex of $G$ with that chosen vertex of $H$.
  After the identification, the external field in that vertex is that of output vertex of $G$.
  Therefore, after the modification, the new instance is an instance of $\spin(\beta, \gamma, \mathcal{V})$ and the partition function is equal to the partition function of $H$ scaled by a polynomial-time computable global factor $\left(\frac{Z(G)}{1+\mu(G)}\right)^j$, where $j$ is the number of vertices with external field $\mu(G)$ in $H$.

  For the second part, for an instance $H$ of $\spin(\beta, \gamma, \mathcal{V} \cup \{\mu\})$ and required approximation parameter $\eps$, choose a gadget $G_i$ which is $\eps'=\frac{\eps}{2 n}$ close to realize $\mu$; do the same modification as above using this $G_i$ and call the oracle for the new instance with approximation parameter $\eps'$. This gives the desired approximation for the original instance.
  \qed
\end{proof}

\subsection{The Construction}
We first define a gadget operation $\mathtt{comb}$ as follows:  for a given list of graphs $\mathcal{G}=\{G_1,\dots,G_k\}$, each with output $v_i^*$ for $i\in[k]$, $\mathtt{comb}(\mathcal{G})$ is a new graph $G(V,E)$ that combines the graphs and joins their outputs.
Fig. \ref{fig:tree} is an illustration of $\mathtt{comb}$.
Formally, we define $V=\{u\}\cup\bigcup_{i\in[k]}V(G_i)$ and $E=\{(u,v_i^*)\mid i\in[k]\}\cup\bigcup_{i\in[k]}E(G_i)$, where $u$ is the output of $G$. It is easy to verify that
$\mu(G)=\mu\prod_{i\in[k]}h\left(\mu(G_i)\right)$.

\begin{figure}[htp]
	%%down-vote llncs for this..
	%\begin{subfigure}[b]{0.4\textwidth}
	\centering
	\subfloat{
		\centering
		\begin{tikzpicture}[
				thick,
				% Set the overall layout of the tree
				level/.style={level distance=6mm, sibling distance=0.6cm}%,scale=\size
			]
			%\coordinate
			\node [circle,draw,scale=0.5]{}
					child {
						node[circle,draw,scale=0.5,fill] {}
						edge from parent
					}
					child {
						node[circle,draw,scale=0.5,fill]{}
						edge from parent
					}
					 child {
						node[circle,draw,scale=0.5,fill]{}
						edge from parent
					}
					 child {
						node[circle,draw,scale=0.5,fill]{}
						edge from parent
					}
					 child {
						node[circle,draw,scale=0.5,fill]{}
						edge from parent
						%child[grow=right]{
						%	child[grow=right] {
						%		node {$\longrightarrow$}
						%		edge from parent[draw=none]
						%	}
						%edge from parent[draw=none]
						%}
					};
		\end{tikzpicture}
	%\end{subfigure}
	}
	\qquad
	\scalebox{3}{$\Rightarrow$}
	%\begin{subfigure}[b]{0.5\textwidth}
	\qquad
	\subfloat{
		\centering
		\begin{tikzpicture}[
				thick,
				% Set the overall layout of the tree
				level/.style={level distance=10mm/#1, sibling distance=3cm/(5^(#1-1))}%,scale=\size
			]
			%\coordinate
			\node [circle,draw,scale=0.5]{}
				child {
				 node[circle,draw,scale=0.5,fill]{}
					child {
						node[circle,draw,scale=0.5,fill] {}
						edge from parent
					}
					child {
						node[circle,draw,scale=0.5,fill]{}
						edge from parent
					}
					 child {
						node[circle,draw,scale=0.5,fill]{}
						edge from parent
					}
					 child {
						node[circle,draw,scale=0.5,fill]{}
						edge from parent
					}
					 child {
						node[circle,draw,scale=0.5,fill]{}
						edge from parent
					}
					edge from parent
				}
				child {
				 node[circle,draw,scale=0.5,fill]{}
					child {
						node[circle,draw,scale=0.5,fill] {}
						edge from parent
					}
					child {
						node[circle,draw,scale=0.5,fill]{}
						edge from parent
					}
					 child {
						node[circle,draw,scale=0.5,fill]{}
						edge from parent
					}
					 child {
						node[circle,draw,scale=0.5,fill]{}
						edge from parent
					}
					 child {
						node[circle,draw,scale=0.5,fill]{}
						edge from parent
					}
					edge from parent
				};

		\end{tikzpicture}
	%\end{subfigure}
	}
	\caption{Result of $\mathcal{S}_5 \Rightarrow \mathtt{comb}\left(\set{\mathcal{S}_5, \mathcal{S}_5}\right)$, the output vertex is marked as unfilled.}
\label{fig:tree}
\end{figure}

We also define two basic gadgets.
Let $\mathcal{S}_w$ be a $w$-star graph, with output being its center. In particular, $\mathcal{S}_0$ is the singleton graph.
Note that $\mu(\mathcal{S}_w) = \mu h(\mu)^w$.
We also define $\mathcal{T}_t$ be a $d$-ary tree with depth $t$.
For any external field $\hat{\mu} \in (0, \mu^*]$, we shall construct a list of gadgets to simulate it.
The two boundaries are approached by $\mathcal{S}_w$ and $\mathcal{T}_t$ respectively.

\begin{proposition}
  \label{prop:convergence}
  Let $\mathcal{T}_t$ be a $d$-ary tree with depth $t$ and $\mathcal{S}_w$ be a $w$-star. Then
  \begin{enumerate}[(1)]
  \item $\set{\mathcal{S}_w}_{w \ge 1}$ realizes $0$, or formally,
    $ \mu(\mathcal{S}_w) = \mu h(\mu)^w < \mu \beta^w.$
  \item
    $\set{\mathcal{T}_t}_{t \ge 0}$ realizes $\mu^*$, or formally, there exist two positive constants $\iota$ and $c<1$ depending on $\mu,\beta,\gamma$ and $d$ such that $1< \frac{\mu(\mathcal{T}_{t})}{\mu^*} \le \exp(c^t\iota).$
  \end{enumerate}
\end{proposition}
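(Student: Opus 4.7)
\textbf{Proof plan for Proposition \ref{prop:convergence}.}

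For part (1), a direct computation on the star suffices. With center $v^*$ as output and $w$ leaves, summing over leaf states yields
\[ Z_{\mathcal{S}_w}(v^*=0)=\mu(\beta\mu+1)^w,\qquad Z_{\mathcal{S}_w}(v^*=1)=(\mu+\gamma)^w, \]
so $\mu(\mathcal{S}_w)=\mu\,h(\mu)^w$. The bound $h(\mu)<\beta$ is the rearrangement of $\beta\mu+1<\beta(\mu+\gamma)$, which is just $1<\beta\gamma$. Letting $w\to\infty$ gives convergence to $0$.

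For part (2), the plan is to set up the natural recursion on the trees and treat it as a contraction. Recognizing $\mathcal{T}_t$ as $\mathtt{comb}$ of $d$ copies of $\mathcal{T}_{t-1}$ joined under a fresh root, and using the identity $\mu(\mathtt{comb}(\mathcal{G}))=\mu\prod_i h(\mu(G_i))$ established right before the proposition, we obtain $\mu_t:=\mu(\mathcal{T}_t)=\mu\,h(\mu_{t-1})^d=:f(\mu_{t-1})$ with $\mu_0=\mu$ (the singleton).

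The next step is to show monotone convergence $\mu_t\downarrow\mu^*$. Since $h$ is increasing, so is $f$. Because $\beta\leq 1$ and $h(\mu)<\beta$, we have $f(\mu)<\mu\beta^d\le\mu$, so the largest fixed point $\mu^*$ (which exists by continuity since $f$ is bounded above by $\mu\beta^d$) lies in $(0,\mu)$. On $(\mu^*,\infty)$, $f(x)-x$ has no zero and is negative as $x\to\infty$, so $f(x)<x$ throughout. By induction, if $\mu_{t-1}>\mu^*$ then $\mu^*=f(\mu^*)<f(\mu_{t-1})=\mu_t<\mu_{t-1}$. Monotone convergence forces the limit to be a fixed point in $[\mu^*,\mu]$, and maximality of $\mu^*$ forces it to equal $\mu^*$.

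For the exponential rate, I would pass to the logarithmic variable $L_t:=\log\mu_t-\log\mu^*\ge 0$. Since $\mu_t=\mu\,h(\mu_{t-1})^d$ and $\mu^*=\mu\,h(\mu^*)^d$,
\[ L_t=d\bigl(\log h(\mu_{t-1})-\log h(\mu^*)\bigr)=d\int_{\mu^*}^{\mu_{t-1}}\frac{(\beta\gamma-1)\,\mathrm ds}{(\beta s+1)(s+\gamma)}, \]
while $L_{t-1}=\int_{\mu^*}^{\mu_{t-1}}\mathrm ds/s$. Hence the ratio $L_t/L_{t-1}$ is bounded by $\sup_{s\in[\mu^*,\mu_{t-1}]}\phi(s)$, where $\phi(s):=\dfrac{ds(\beta\gamma-1)}{(\beta s+1)(s+\gamma)}$. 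The key estimate is $\phi(\mu^*)<1$; this follows from the Proposition bound $\mu^*>\mu/\gamma^d$ together with the hypothesis on $\mu$ in Lemma \ref{lemma:key-reduction}, which was designed precisely so that $(\beta\mu^*+1)(\mu^*+\gamma)>d(\beta\gamma-1)\mu^*$. Fix any $c\in(\phi(\mu^*),1)$; by continuity $\phi(s)\le c$ on some right-neighborhood of $\mu^*$, and since $\mu_t\downarrow\mu^*$, there is a $T_0$ with $\mu_{t}$ in this neighborhood for all $t\ge T_0$. Therefore $L_t\le c\,L_{t-1}$ for $t>T_0$, so $L_t\le c^{t-T_0}L_{T_0}\le c^{t-T_0}L_0=c^{-T_0}\log(\mu/\mu^*)\cdot c^t$, which gives the desired bound $\mu_t/\mu^*\le\exp(c^t\iota)$ with $\iota:=c^{-T_0}\log(\mu/\mu^*)$.

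The main obstacle is the last step: showing $\phi(\mu^*)<1$ cleanly and, more delicately, handling the possibly non-monotone behavior of $\phi$ on the full interval $[\mu^*,\mu]$ (the maximum of $\phi$ can sit at $s=\sqrt{\gamma/\beta}$ rather than at $\mu^*$). The trick is that we do not need the contraction uniformly from $t=0$: monotone convergence of $\mu_t$ forces the iterates into the attracting neighborhood of $\mu^*$ after a bounded number of steps, and the finitely many transient terms are swept into the constant $\iota$.
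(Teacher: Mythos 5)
Your proof takes essentially the same route as the paper's: monotone decrease of $\mu_t$ to $\mu^*$, then a log-scale contraction in a right-neighborhood of $\mu^*$, with the finitely many transient steps absorbed into the constant $\iota$; your ratio-of-integrals bound is exactly the estimate the paper obtains by applying the mean-value theorem to $z\mapsto\ln f(e^z)$, whose derivative is your $\phi(s)=sf'(s)/f(s)$ (the paper's $g$). One small misattribution: the strict inequality $\phi(\mu^*)=f'(\mu^*)<1$ is not what the numerical lower bound on $\mu$ in Lemma~\ref{lemma:key-reduction} was designed to guarantee (that bound serves Proposition~\ref{prop:init-sol}, i.e.\ $\mu h(0)^d\le\mu^* h(\mu)$); the paper instead reads $f'(\mu^*)<1$ off from $\mu^*$ being the \emph{largest} fixed point of $f$ together with $f(\mu)<\mu$, so that the graph of $f$ lies below the diagonal immediately to the right of $\mu^*$, and you should cite that rather than the numerical hypothesis.
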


\begin{proof}
  (1) is obvious, we only prove (2).

  Note that $\mu(\mathcal{T}_t) = \mu h(\mu(\mathcal{T}_{t-1}))^d$, we denote $f(x)=\mu h(x)^d$. Recall that $\mu^*$ is the largest fixed point of $f(x)$ and $f(\mu)<\mu$, we have $0<f'(\mu^*)<1$. Define $g(x)=\frac{xf'(x)}{f(x)}$, then $g(\mu^*)=f'(\mu^*)$. Since $g(x)$ is a continuous function, we can choose some $\eta>0$ such that $0<g(x)\le c<1$ for all $x\in(\mu^*-\eta,\mu^*+\eta)$. 
  
  We now define a sequence $\set{x_i}_{i\ge 0}$ such that $x_0=\mu$ and $x_i=f(x_{i-1})$ for all $i\ge 1$. We claim that $\set{x_i}$ converges to $\mu^*$ as $i$ approaches infinity. To see this, note that $x_{i+1}=f(x_i)<x_i$ and $x_i>\mu^*$ for all $i\ge 0$. This implies $\{x_i\}$ converges to some $z\ge \mu^*$. Moreover, since $f$ is continuous, the sequence $\set{f(x_i)}_{i\ge 0}$ also converges to $z$. These two facts together imply $z = \lim_{i\to\infty}f(x_i) = f(\lim_{i\to\infty}x_i) = f(z)$. In other word, $z$ is a fixed point of $f$ and thus $z=\mu^*$. The claim implies that for some integer $t_0$, $x_{t_0}\in(\mu^*,\mu^*+\eta)$.
  
  We define another sequence $\set{y_i}_{i \ge 0}$ such that $y_0=\mu(\mathcal{T}_{t_0})$ and $y_i=f(y_{i-1})$ for all $i\ge 1$. It holds that $y_i\in (\mu^*,\mu^*+\eta)$ and thus $g(y_i)\le c<1$ for all $i\ge 0$. Therefore for all $t\ge 1$, 
  \begin{align*}
    \ln y_{t}-\ln\mu^*
    &=\ln f(y_{t-1})-\ln f(\mu^*)\\
    &=\frac{\tilde yf'(\tilde y)}{f(\tilde y)}\cdot\left|\ln y_{t-1}-\ln\mu^*\right|\quad\mbox{for some }y\in[\mu^*,y_{t-1}]\\
    &=g(\tilde y)\cdot \left|\ln y_{t-1}-\ln\mu^*\right|\\
    &\le c \cdot \left|\ln y_{t-1}-\ln\mu^*\right|\\
    &\le c^t\eta.
  \end{align*}
  We denote $\iota = \max\left\{\ln\mu, \eta c^{-t_0}\right\}$ and conclude the proof.
  \qed
\end{proof}

Our main idea to realize a target external field $\hat{\mu}$ is to construct a list of gadgets $\mathcal{G}=\{G_1,\dots,G_k\}$ such that $\mu\left(\mathtt{comb}(\mathcal{G})\right)\approx \hat{\mu}$ or more concretely $\hat{\mu} \approx \mu\prod_{i\in[k]}h\left(\mu(G_i)\right)$. All but one of these $G_i$ are basic gadgets of the following three types: (1) isolate point $\mathcal{S}_0$ with $\mu(\mathcal{S}_0)=\mu$; (2) $\mathcal{S}_w$ with large enough $w$ such that $\mu(\mathcal{S}_w)\approx 0$; and (3)$\mathcal{T}_t$ with large enough $t$ such that  $\mu(\mathcal{T}_t)\approx \mu^*$. The remaining one $G_i$ is recursively constructed with a new target  $\hat{\mu}'$ so that ideally  $\hat{\mu} =\mu\prod_{i\in[k]}h\left(\mu(G_i)\right)$ holds.
The combination of these basic gadgets are carefully chosen so that the new target $\hat{\mu}'$ is also in the range $(0, \mu^*]$. Then we recursively construct this $\hat{\mu}'$ by a subtree.
We terminate the recursion after enough steps, and use a basic star gadget which is closest to the desired value as an approximation in the leaf.
With a correlation decay argument, we show that the error in the root can be exponentially small in terms of the depth, although there may be a constant error in the leaf.
A detailed construction with special treatment for the boundary cases are formally given in  Algorithm \ref{algo:construct}.

\begin{algorithm}[ht]\label{algo:construct}
  \SetKwInOut{Input}{input}\SetKwInOut{Output}{output}
  \emph{ \textbf{function} $\mathtt{construct(\ell,\hat\mu)}:$}
  \BlankLine
  \Input{Recursion depth $\ell$; Target $0<\hat\mu\le \mu^*$ to simulate; }
  \Output{Graph $G_\ell$ constructed.}
  \Begin{
    \If{$\ell=0$ }{
      Let $k$ be the positive integer such that
      $\mu h(\mu)^{k+1}<\hat\mu\le\mu h(\mu)^{k}$\;
      \Return{ $S_k$}\;
    }
    \Else{
      Let $k$ be the non-negative integer with $\mu^*h(\mu)^{k+1}<\hat\mu\le\mu^*h(\mu)^k$ \;
      $\mathcal{Y}' \gets k \cdot \mathcal{S}_0$\tcp*[l]{a set of $k$ copies of $\mathcal{S}_0$.}
      $\mu_1 \gets \frac{\hat\mu}{h(\mu)^k}$ \;
      \tcp{Invariant: $\mu h(x')^{d-i+1} = \mu_i$ has a solution $0< x' \le \mu^*$.}
      \For{$i \gets 1$ \KwTo $d-1$}{
        \If{$\mu h(\mu^*) h(0)^{d-i} \ge \mu_i$}{
          $y_i \gets 0$;   $w \gets \lfloor \frac{\ell \cdot \ln \alpha - \ln (d \mu)}{\ln \beta} \rfloor + 1$;       $Y_i \gets \mathcal{S}_w$\;
        }
        \Else{
          $y_i \gets \mu^*$;
          % $t \gets \lfloor \frac{\ell \cdot \ln \alpha - \ln(d^2 \ln \gamma)}{\ln (\frac{d}{d+1})} \rfloor + 1$;
          $t \gets \lfloor\frac{\ell\cdot\ln\alpha-\ln d-\ln\iota}{\ln c}\rfloor+1$;
          $Y_i \gets \mathcal{T}_t$\;
        }
        $\mu_{i+1} \gets \frac{\mu_{i}}{h(y_i)}$\;
      }
      Let $\hat\mu'$ be the solution of $\mu h(x)=\mu_d$ in $(0,\mu^*]$\; %\tcp*[r]{Recall the invariant.}
      $\mathcal{Y} \gets \mathcal{Y}' \cup \set{Y_i}_{i\ge 1}^{d-1}$\;
      $\delta\gets\exp(-\frac{\ln\gamma\ln\alpha}{\ln\beta}\ell+\frac{\ln\gamma\ln(d\mu)}{\ln\beta}+\ln\frac{\mu}{\gamma})$\;
      \If{$\hat\mu'\le \delta$}{
        Choose the largest integer $w$ such that $\mu\left(\frac{1}{\gamma}\right)^w> \delta $\;
        \Return{$\mathtt{comb}(\mathcal{Y}\cup\{\mathcal{S}_w\})$}\;
      }
      \Else{

        \Return{ $\mathtt{comb}(\mathcal{Y}\cup\mathtt{construct(\ell-1,\hat\mu')})$}\;
      }
    }
  }
  \caption{Constructing $G_\ell$}
\end{algorithm}

\bigskip
Before we prove that the construction is correct, we obtain a few observations which is used in our proof.
The condition on $\mu$ in the key Lemma \ref{lemma:key-reduction} is due to the following property we need.

\begin{proposition}
  \label{prop:init-sol}
  Let $\mu > \frac{\gamma^d}{\beta} (\beta \gamma -1)\left(1+\frac{d+1}{\ln \left(\beta (\beta \gamma)^d\right)}\right)$, for any $\mu_1$ with $\mu^* h(\mu) < \mu_1 \le \mu^*$, the equation $\mu h(x)^d = \mu_1$ always has a solution with $0 < x \le\mu^*$.
\end{proposition}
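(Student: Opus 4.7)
The plan is to reduce the statement to a fixed-point comparison for the monotone map $f(x) := \mu h(x)^d$ and then verify a single logarithmic estimate whose form matches the hypothesis on $\mu$.

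Since $h'(x) = (\beta\gamma-1)/(x+\gamma)^2 > 0$, the map $f$ is continuous and strictly increasing on $[0,\infty)$ with $f(0) = \mu/\gamma^d$ and $f(\mu^*)=\mu^*$. Hence $f$ bijects $(0,\mu^*]$ onto $(\mu/\gamma^d,\mu^*]$, so the equation $f(x)=\mu_1$ has a solution in $(0,\mu^*]$ as soon as $\mu_1 > \mu/\gamma^d$. Since the hypothesis gives $\mu_1 > \mu^* h(\mu)$, it is enough to show $\mu^* h(\mu) \ge \mu/\gamma^d$, equivalently $\mu^* \ge y_0$ where $y_0 := \mu/(\gamma^d h(\mu))$. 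Because $\mu^*$ is the largest fixed point of $f$ and $f$ is bounded above by $\mu\beta^d$, the quantity $f(x)-x$ tends to $-\infty$, so $f(x)<x$ for every $x>\mu^*$ (else IVT would yield a strictly larger fixed point). Thus the task reduces to proving the single inequality $f(y_0)\ge y_0$.

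Rewriting $f(y_0)\ge y_0$ as $\gamma^d h(y_0)^d h(\mu)\ge 1$, using the decomposition $h(x)/\beta = 1 - (\beta\gamma-1)/(\beta(x+\gamma))$ and taking logarithms, we need
\[ \ln\bigl(\beta(\beta\gamma)^d\bigr) \ge -d\ln(h(y_0)/\beta) - \ln(h(\mu)/\beta). \]
The algebraic identity $\beta(x+\gamma)-(\beta\gamma-1) = \beta x + 1$ combined with $-\ln(1-t)\le t/(1-t)$ yields the clean bound $-\ln(h(x)/\beta) \le (\beta\gamma-1)/(\beta x + 1)$. Applying this at $x=y_0$ and $x=\mu$ and bounding both $\beta y_0 + 1$ and $\beta\mu+1$ from below by $\mu/\gamma^d$ (for the former use $h(\mu)<\beta$, giving $\beta y_0 > \mu/\gamma^d$; for the latter use $\beta\gamma^d \ge 1$, which follows from $\gamma>1$, $\beta\gamma>1$, and $d\ge 1$), the right-hand side is bounded above by $(d+1)(\beta\gamma-1)\gamma^d/\mu$.

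It remains only to check that the hypothesis ensures $\ln(\beta(\beta\gamma)^d) \ge (d+1)(\beta\gamma-1)\gamma^d/\mu$. Since $\beta\le 1$, the hypothesis implies $\mu > (d+1)(\beta\gamma-1)\gamma^d/\ln(\beta(\beta\gamma)^d)$, which is precisely the required estimate. The main obstacle is designing the logarithmic bound so that the tail term takes exactly the shape $(d+1)/\ln(\beta(\beta\gamma)^d)$ appearing in the hypothesis; this is made possible by the identity $\beta(x+\gamma)-(\beta\gamma-1)=\beta x+1$, which turns the Taylor-type estimate $-\ln(1-t)\le t/(1-t)$ into a tight bound in the present setting, and by the crude lower bound $\beta y_0+1 > \mu/\gamma^d$ extracted from $h(\mu)<\beta$.
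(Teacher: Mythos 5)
Your proof is correct, and it takes a genuinely different (if closely related) route from the paper's. Both arguments reduce the statement to the lower-endpoint inequality $\mu h(0)^d=\mu/\gamma^d\le\mu^* h(\mu)$ (the upper endpoint being trivial since $\mu^*=\mu h(\mu^*)^d$), and both close with the same Taylor estimate $-\ln(1-t)\le t/(1-t)$. The difference is in how the implicitly defined $\mu^*$ is handled. The paper rewrites the inequality as $\left(h(\mu^*)/h(0)\right)^d h(\mu)>1$, symmetrizes via $h(\mu)>h(\mu^*)$ to get $\gamma^d h(\mu^*)^{d+1}>1$, applies the Taylor bound to $h(\mu^*)$, and then invokes the separately stated bound $\mu^*>\mu/\gamma^d$ to translate the condition on $\mu^*$ into the hypothesis on $\mu$. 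You instead introduce the explicit quantity $y_0=\mu/(\gamma^d h(\mu))$, observe that $\mu^*\ge y_0$ is equivalent to $f(y_0)\ge y_0$ (since $\mu^*$ is the largest fixed point of the increasing map $f$ with $f(x)<x$ for $x>\mu^*$), and apply the Taylor bound to $h(y_0)$ and $h(\mu)$ separately, using the clean identity $\beta(x+\gamma)-(\beta\gamma-1)=\beta x+1$ to get $-\ln(h(x)/\beta)\le(\beta\gamma-1)/(\beta x+1)$. This buys you two things: the argument never needs the a priori lower bound on $\mu^*$, and all intermediate quantities ($y_0$, $\beta y_0+1$, $\beta\mu+1$) are explicit in $\mu,\beta,\gamma,d$. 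Your final estimate in fact uses slightly less than the full hypothesis (you drop the $1/\beta\ge 1$ factor and the $+1$ term), so your route proves the proposition under a marginally weaker assumption on $\mu$; the paper's symmetrization makes the displayed condition align exactly with the stated hypothesis.
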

\begin{proof}
  It suffices to show $\mu\cdot h(0)^d\le\mu^*h(\mu)$ and $\mu\cdot h(\mu^*)^d\ge \mu^*$.
  Since $\mu^* = \mu h(\mu^*)^d$, the second part is trivial.
  As for the first part,  it is sufficient to show
  $\left(\frac{h(\mu^*)}{h(0)}\right)^d h(\mu) > 1$.
  Note that $\left(\frac{h(\mu^*)}{h(0)}\right)^d h(\mu) >\gamma^d h(\mu^*)^{d+1} > \gamma^d \left(\beta - \frac{\beta \gamma - 1}{\mu^*}\right)^{d+1}$,
  \[
  \gamma^d \left(\beta - \frac{\beta \gamma - 1}{\mu^*}\right)^{d+1} > 1
  \iff   \ln \left(\beta (\beta \gamma )^d\right) + (d+1)\ln \left( 1 - \frac{\beta \gamma - 1}{\beta \mu^*} \right)  > 0,
  \]
  \[
  (d+1)\ln \left( 1 - \frac{\beta \gamma - 1}{\beta \mu^*} \right)  \overset{(\clubsuit)}{>} -(d+1) \frac{\frac{\beta \gamma - 1}{\beta \mu^*}}{1 - \frac{\beta \gamma - 1}{\beta \mu^*}} \overset{(\spadesuit)}{>} -\ln \left(\beta (\beta \gamma )^d\right) ,
  \]
  where $(\clubsuit)$ is due to $\ln(1-x) > -\frac{x}{1-x}$ for $x \in (0,1)$,
  and $(\spadesuit)$ is by the fact that $\beta (\beta \gamma)^d >1$ and the choice of $\mu$ such that $ - \frac{\beta \gamma - 1}{\beta \mu^*} >\frac{\ln \left(\beta (\beta \gamma)^d\right)+d+1}{\beta \ln \left(\beta (\beta \gamma)^d\right)}$.
\qed
\end{proof}

\begin{proposition}\label{prop:hproperty}
  For every $x,t\ge 0$, it holds that $h(x+t)\le (1+t)h(x)$ and $h\left((1+t)x\right)\le(1+t)h(x)$.
\end{proposition}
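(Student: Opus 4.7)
The plan is to prove both inequalities by direct algebraic manipulation, using the decomposition $h(x) = \beta + \frac{1-\beta\gamma}{x+\gamma}$ which exposes $h$ as a simple Möbius transform. The hypothesis $\beta \leq 1$ enters exactly once, through a sign comparison, and $\beta\gamma > 1$ controls the monotonicity but is not used in the estimate itself. The case $t=0$ is trivial throughout, so I will tacitly assume $t>0$.

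For the first inequality $h(x+t) \le (1+t)h(x)$, I would first compute the telescoping identity
\begin{equation*}
  h(x+t) - h(x) \;=\; (1-\beta\gamma)\left(\frac{1}{x+\gamma+t}-\frac{1}{x+\gamma}\right) \;=\; \frac{t(\beta\gamma-1)}{(x+\gamma+t)(x+\gamma)}.
\end{equation*}
The claim then becomes
\begin{equation*}
  \frac{\beta\gamma-1}{(x+\gamma+t)(x+\gamma)} \;\le\; \frac{\beta x + 1}{x+\gamma},
\end{equation*}
i.e.\ $\beta\gamma - 1 \le (\beta x + 1)(x+\gamma+t)$. Since $x,t\ge 0$, the right-hand side is at least $\gamma$, and the required bound $\gamma \ge \beta\gamma-1$ is precisely $\gamma(1-\beta) + 1 \ge 0$, which holds because $\beta \le 1$. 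This is the one step where the assumption $\beta\leq 1$ is indispensable.

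For the second inequality $h((1+t)x) \le (1+t)h(x)$, I would set $s = 1+t \ge 1$ and cross-multiply (both denominators $sx+\gamma$ and $x+\gamma$ are positive), reducing the claim to
\begin{equation*}
  s(\beta x+1)(sx+\gamma) - (\beta sx+1)(x+\gamma) \;\ge\; 0.
\end{equation*}
Expanding each side and subtracting yields the clean factorization
\begin{equation*}
  s(\beta x+1)(sx+\gamma) - (\beta s x+1)(x+\gamma) \;=\; (s-1)\bigl[\beta s x^2 + (s+1)x + \gamma\bigr],
\end{equation*}
which is manifestly non-negative since $s\ge 1$ and $\beta,\gamma,x\ge 0$. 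Notice that this second inequality does not even need $\beta\le 1$ or $\beta\gamma>1$; it is purely a consequence of the positivity of the parameters.

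The only real obstacle here is purely bookkeeping: to carry out the two expansions cleanly and to identify the factorization in the second part. Both steps are short enough to be done by inspection, so no nontrivial ideas are required beyond observing that $h$ is a Möbius function, which makes differences telescope and products factor nicely.
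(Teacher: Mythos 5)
Your proof is correct and follows essentially the same route as the paper: clear denominators and verify the resulting polynomial inequality in $t$. Your telescoping via $h(x)=\beta+\frac{1-\beta\gamma}{x+\gamma}$ is a slightly cleaner organization of part (1) that makes explicit that the hypothesis $\beta\le 1$ (in force throughout that section) is genuinely needed there — the paper asserts the linear coefficient is positive without flagging this — but the underlying computation is the same.
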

\begin{proof}
  Note that $x,t\ge 0$,
  \begin{align*}
    & h(x+t)\le (1+t)h(x)
    \iff \left(\frac{\beta(x+t)+1}{x+t+\gamma}\right)\le\left(1+t\right)\left(\frac{\beta x+1}{x+ \gamma}\right)\\
    \iff& t^2(1 + \beta x) + t\left(1 + \gamma \left(1 + \beta(x-1)\right) + x + \beta x^2\right)\ge 0.
  \end{align*}
  % The last inequality can be easily checked by considering the LHS of which as a parabola on $t$.
  Since $\left(1 + \gamma \left(1 + \beta(x-1)\right) + x + \beta x^2\right)>0$, the inequality always holds.
  \begin{align*}
    h\left((1+t)x\right)\le (1+t)h(x)
    \iff& \frac{x(1+t)\beta+1}{x(1+t)+\gamma}\le (1+t)\frac{\beta x+1}{x+\gamma}\\
    \iff& t^2 (x + \beta x^2) + t (\gamma + 2x + \beta x^2)\ge 0
  \end{align*}
  % The last inequality can be easily checked by considering the LHS of which as a parabola on $t$.
  Again every term is non-negative, the last inequality is always true.
  \qed
\end{proof}

In the following, we start to verify the correctness of the construction. We first verify that the algorithm is well defined, namely $\mu h(x)=\mu_d$ does have a solution  $\hat\mu'$  in $(0,\mu^*]$.
This can be done by verifying the loop invariant ``$\mu h(x')^{d-i+1} = \mu_i$ has a solution $0< x' \le \mu^*$" inductively.
% \begin{proof}
\begin{description}
\item[Initialization.]
  For $i=1$, by Proposition \ref{prop:init-sol}, for some $0<\tilde x\le\mu^*$ it holds that $\mu h(\tilde x)^{d-i+1}=\mu_i$.

\item[Maintenance.]
Assuming $\mu h(\tilde x)^{d-i+1}=\mu_i$ has solutions $\tilde x\in(0,\mu^*]$,
  we verify that $\mu h(x')^{d-i}=\mu_{i+1} \equiv \frac{\mu_i}{h(y_i)}$ has solutions $x'\in(0,\mu^*]$ for $i \in[1,d-1]$.

  {\noindent\bfseries Case} $\mu h(\mu^*) h(0)^{d-i} \ge \mu_i$.
  By assumption we have $\mu h(0)^{d-i+1} < \mu_i$,
  also note that $\mu_i \le \mu h(\mu^*)h(0)^{d-i}\le \mu h(0)h(\mu^*)^{d-i}$,
  hence
  $\mu h(0)^{d-i} < \frac{\mu_i}{h(0)} \le \mu h(\mu^*)^{d-i}$.
  Then by continuity, $\mu h(x')^{d-i}=\frac{\mu_i}{h(0)}$  has solutions $0< x'\le\mu^*$.

  {\noindent\bfseries Case} $\mu h(\mu^*) h(0)^{d-i} < \mu_i$.
  By assumption $\mu h(\mu^*)^{d-i+1} \ge \mu_i$,
  thus $\mu h(0)^{d-i} < \frac{\mu_i}{h(\mu^*)} \le \mu h(\mu^*)^{d-i}$,
  hence $\mu h(x')^{d-i}=\frac{\mu_i}{h(\mu^*)}$  has solutions $0< x'\le\mu^*$.

\item[Termination.]
  After the loop completes, %for $i=d-1$, we have
  $\mu h(x')=\mu_{d}$  has solutions $0< x'\le\mu^*$.
\end{description}

% \end{proof}

Now we verify the vertex weight gadget returned by the construction satisfies our requirement by choosing $\ell=O(-\log \eps) $.
\begin{lemma}
  \label{thm:gadget}
  For $0< \hat\mu \leq \mu^*(\beta,\gamma,\mu)$, and let $G(V,E)$ be the graph returned by $\mathtt{construct}(\ell,\hat\mu)$, we have the following: (1) $\exp(-(c+\ell)\cdot\alpha^{\ell})\le\frac{\mu(G)}{\hat\mu}\le\exp((c+\ell)\cdot\alpha^{\ell})$, where
  $c= \ln \gamma$ and $\alpha = \frac{\sqrt{\beta \gamma} - 1}{\sqrt{\beta \gamma} + 1}<1$; (2) $\abs{G}=\exp(O(\ell))$.
%  \begin{enumerate}[(1)],
%  \item
%    Let $c= \ln \gamma$ and $\alpha = \frac{\sqrt{\beta \gamma} - 1}{\sqrt{\beta \gamma} + 1}$,
%    \[
%    \exp(-(c+\ell)\cdot\alpha^{\ell})\le\frac{\mu(G)}{\hat\mu}\le\exp((c+\ell)\cdot\alpha^{\ell}).
%    \]
%  \item $\abs{G}=\exp(O(\ell))$.
%  \end{enumerate}
\end{lemma}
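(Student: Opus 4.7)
The plan is to prove part (1) by induction on $\ell$ with the explicit bound $|\ln(\mu(G_\ell)/\hat\mu)| \le (c+\ell)\alpha^\ell$, while part (2) follows from a straightforward size recurrence. For the base case $\ell=0$, the algorithm returns $\mathcal{S}_k$ with $k$ chosen so that $\mu h(\mu)^{k+1} < \hat\mu \le \mu h(\mu)^k$; since $\mu(\mathcal{S}_k) = \mu h(\mu)^k$, the ratio $\mu(\mathcal{S}_k)/\hat\mu$ lies in $[1,1/h(\mu)] \subseteq [1,\gamma]$, hence $|\ln(\mu(\mathcal{S}_k)/\hat\mu)| \le \ln\gamma = c$ as required.

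For the inductive step at $\ell \ge 1$, the identity $\mu(\mathtt{comb}(\mathcal{G})) = \mu\prod_i h(\mu(G_i))$ applied to $G_\ell = \mathtt{comb}(\mathcal{Y} \cup \{G'\})$, together with the defining chain $\mu_1 = \hat\mu/h(\mu)^k$, $\mu_{i+1} = \mu_i/h(y_i)$ and $\mu h(\hat\mu') = \mu_d$, yields
\[ \mu(G_\ell) = \mu\, h(\mu)^k \prod_{i=1}^{d-1} h(\mu(Y_i)) \cdot h(\mu(G')), \]
\[ \hat\mu = \mu\, h(\mu)^k \prod_{i=1}^{d-1} h(y_i) \cdot h(\hat\mu'). \]
Subtracting logarithms decomposes $|\ln(\mu(G_\ell)/\hat\mu)|$ into $d$ summands each of the form $|\ln h(a) - \ln h(b)|$, where $b$ is a target value and $a$ its approximation.

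The central technical ingredient is the contraction estimate $\sup_{x>0} x|h'(x)|/h(x) = \alpha$, which follows from $h'(x)=(\beta\gamma-1)/(x+\gamma)^2$ and the AM-GM bound $(x+\gamma)(\beta x+1) \ge x(\sqrt{\beta\gamma}+1)^2$; this bounds the logarithmic distortion induced by $h$ under a multiplicative perturbation of its input. For the deep recursive child in the non-cutoff branch $\hat\mu'>\delta$, the induction hypothesis $|\ln(\mu(G')/\hat\mu')| \le (c+\ell-1)\alpha^{\ell-1}$ combined with this contraction contributes at most $(c+\ell-1)\alpha^\ell$. For a $\mathcal{T}_t$-leaf approximating $y_i=\mu^*$, Proposition 6(2) and the algorithmic choice of $t$ give $|\ln(\mu(\mathcal{T}_t)/\mu^*)| \le \alpha^\ell/d$, which the contraction shrinks further. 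The remaining cases—an $\mathcal{S}_w$-leaf approximating $y_i=0$ and the cutoff star approximating $\hat\mu'\le\delta$—require direct estimates rather than log-derivatives, since one cannot take a logarithmic ratio with $0$. Proposition 6(1) and the choice of $w$ in Algorithm 1 force $\mu(\mathcal{S}_w) = O(\alpha^\ell/d)$, and a first-order expansion of $\ln h$ at $0$ then gives per-term error $O(\alpha^\ell/d)$; in the cutoff case both $\mu(\mathcal{S}_w)$ and $\hat\mu'$ lie in $(0,\gamma\delta]$, and the same expansion gives error $O(\delta)$. Here the ferromagnetic hypothesis $\beta\gamma>1$ enters critically: the algorithm's threshold $\delta$ scales as $\alpha^{(\ln\gamma/|\ln\beta|)\ell}$, and $\ln\gamma/|\ln\beta| \ge 1$ is equivalent to $\beta\gamma \ge 1$, so $\delta = O(\alpha^\ell)$. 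Summing all $d$ contributions stays within $(c+\ell)\alpha^\ell$, closing the induction.

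For part (2), each level of the recursion attaches $k=O(1)$ singletons, $d-1$ leaf gadgets of size $\max(w+1, (d^{t+1}-1)/(d-1)) = \exp(O(\ell))$ (since $w = O(\ell)$ and $t = O(\ell)$ by inspection of the algorithm), and a recursive subgadget of size $|G_{\ell-1}|$; unrolling over $\ell$ levels gives $|G_\ell| = \exp(O(\ell))$. The most delicate part of the argument is calibrating $w$, $t$, and $\delta$ so that each of the $d-1$ leaf errors and the cutoff error is individually $O(\alpha^\ell/d)$—this is precisely what forces the particular forms appearing in Algorithm 1—so that the $d$ boundary contributions together fit inside the unit of slack between $(c+\ell-1)\alpha^\ell$ and $(c+\ell)\alpha^\ell$ permitted by the induction.
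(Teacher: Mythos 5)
Your proof follows essentially the same route as the paper's: induction on $\ell$, decomposition of $\ln(\mu(G_\ell)/\hat\mu)$ into per-$h$-factor error terms, the contraction constant $\alpha=\sup_x x\,h'(x)/h(x)$ applied via the mean value theorem to the recursive child, and separate direct estimates for the leaves near $0$ (your first-order expansion of $\ln h$ at $0$ does the same work as the paper's Proposition~\ref{prop:hproperty}, which gives the clean bound $h(x)\le(1+x)h(0)$). The only caveat is that several of your per-term bounds are stated as $O(\alpha^\ell/d)$ where the induction actually needs the implicit constant to be at most $1$ (so that the $d$ boundary terms sum to at most $\alpha^\ell$, fitting in the single unit of slack between $(c+\ell-1)\alpha^\ell$ and $(c+\ell)\alpha^\ell$); the paper achieves this exact constant via the calibration of $w$, $t$, $\delta$ in Algorithm~\ref{algo:construct} and Proposition~\ref{prop:hproperty}, which you gesture at but do not pin down.
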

\begin{proof}
    We apply induction on $\ell$ for both statements. We prove for (1) first. For base case $\ell=0$, we have
    \[
    \left|\ln\mu(G)-\ln\hat\mu\right|
    \le \left|\ln\mu h(\mu)^k-\ln\mu h(\mu)^{k+1}\right|
    = -\ln h(\mu)
    \le\ln\gamma.
    \]
    Assume the statement holds for smaller $\ell$. Let $k$, $\set{y_i}_{1\le i\le d-1}$ and $\set{Y_i}_{1\le i\le d-1}$ be parameters chosen in the algorithm. Define
   \[      F(z) = \ln\left(\mu h(\mu)^k\prod_{i=1}^{d-1}h(y_i)h(\exp(z))\right),
      \tilde F(z)= \ln\left(\mu h(\mu)^k\prod_{i=1}^{d-1}h(\mu(Y_i))h(\exp(z))\right).\]
%    \end{align*}
    We note that $F(z)$ is the \emph{correct} recursion to compute $\ln(\mu(G))$ and $\tilde F(z)$ is our \emph{approximate} recursion used in the algorithm.

    In the following, we distinguish between $\hat\mu'\le\delta$ and $\hat\mu'>\delta$.
    \begin{itemize}
    \item If $\hat\mu'\le\delta$, then $\ln\mu(G)=\tilde F(\ln\mu(\mathcal{S}_w))$ and $\ln\hat\mu=F(\ln\hat\mu')$. We have
      \begin{align*}
        F(\ln\hat\mu')\le\tilde F(\ln\mu(\mathcal{S}_w))
        &= \ln\left(\mu h(\mu)^k\prod_{i=1}^{d-1}h(\mu(Y_i))h(\mu(\mathcal{S}_w))\right)\\
%        &\le\ln\left(\mu h(\mu)^k\prod_{i=1}^{d-1}h(\mu(Y_i))h(\mu\beta^w)\right)\\
        &\overset{(\heartsuit)}{\le} \alpha^\ell+\ln\left(\mu h(\mu)^k\prod_{i=1}^{d-1}h(y_i)h(\hat\mu')\right)\\
        &=\alpha^\ell+F(\ln\hat\mu')),
      \end{align*}
      where $(\heartsuit)$ follows from following facts derived from Proposition \ref{prop:hproperty}:
      \begin{enumerate}[(i)]
      \item If $y_i=0$, then $0\le\mu(Y_i)\le\frac{\alpha^\ell}{d}$, which implies $h(\mu(Y_i))\ge h(y_i)$ and $h(\mu(Y_i))\le \left(1+\frac{\alpha^\ell}{d}\right)h(y_i)\le\exp\left(\frac{\alpha^\ell}{d}\right)h(y_i)$.
      \item If $y_i=\mu^*$, then $\mu^*\le\mu(Y_i)\le\exp\left(\frac{\alpha^\ell}{d}\right)\mu^*$, which implies $h(\mu(Y_i))\ge h(y_i)$ and $h(\mu(Y_i))\le\exp\left(\frac{\alpha^\ell}{d}\right)h(y_i)$.
      \item We claim that $\hat\mu'< \mu\left(\frac{1}{\gamma}\right)^w\le\mu(\mathcal{S}_w)\le\mu\beta^w\le\frac{\alpha^\ell}{d}$. The only nontrivial part is to verify that $\mu\beta^w\le\frac{\alpha^\ell}{d}$. Since $w$ is the largest integer that $\hat\mu'<\mu\left(\frac{1}{\gamma}\right)^w$, we have $\mu\left(\frac{1}{\gamma}\right)^{w+1}\le\hat\mu'$, which gives $w\ge\frac{\ln\mu-\ln\delta}{\ln\gamma}-1$. Plug this into $\mu\beta^w\le\frac{\alpha^\ell}{d}$ and let $\delta=\exp(-\frac{\ln\gamma\ln\alpha}{\ln\beta}\ell+\frac{\ln\gamma\ln(d\mu)}{\ln\beta}+\ln\frac{\mu}{\gamma})$, the inquality holds. Thus $h(\mu(\mathcal{S}_w))\ge h(\hat\mu')$ and $h(\mu(\mathcal{S}_w))\le h(\frac{\alpha^\ell}{d})\le\left(1+\frac{\alpha^\ell}{d}\right)h(\hat\mu')\le\exp\left(\frac{\alpha^\ell}{d}\right)h(\hat\mu')$.
      \end{enumerate}
    \item If $\hat\mu'>\delta$, define $x=\mu(\mathtt{construct(\ell-1,\hat\mu')})$, then by induction hypothesis, it holds that $\left|\ln x-\ln\hat\mu'\right|\le (c+(\ell-1))\cdot\alpha^{\ell-1}$.

Then similarly by Proposition \ref{prop:hproperty} and the choice of $w$ and $t$, we have
$F(\ln x)\le \tilde F(\ln x)\le F(\ln x) + \alpha^\ell$.
%\]
Thus by construction, we have
\begin{align*}
  \left|\ln\mu(G)-\ln\hat\mu\right|
  &=\left|\tilde F(\ln x)-F(\ln\hat\mu')\right|\\
  &\le \alpha^\ell + \left|F(\ln x)-F(\ln\hat\mu')\right|\\
  &\le \alpha^\ell + \left|F'(\ln\tilde x)\right|\cdot\left|\ln x-\ln\hat\mu'\right| \ \ \ \hbox{(for some $\tilde x\in[\hat\mu',x]$.)}\\
  &\le \alpha^\ell+(\ell-1)\left|F'(\ln\tilde x)\right|\alpha^{\ell-1}+ c\left|F'(\ln\tilde x)\right|\alpha^{\ell-1}
\end{align*}
 Thus it is sufficient to show that
      $
      \left|F'(\ln\tilde x)\right|\le\alpha.
      $
      In fact,
      $F'(\ln x) = \frac{x\cdot h'(x)}{h(x)} = \frac{(\beta\gamma-1)x}{(x+\gamma)(\beta x+1)}\le\frac{\beta\gamma-1}{\left(\sqrt{\beta\gamma}+1\right)^2} = \alpha$.
%      \]
    \end{itemize}

  Now we prove (2) of the Lemma. We denote $s(\ell)=\max_{\hat\mu}\abs{\mathtt{construct}(\ell,\hat\mu)}$ and show that $s(\ell) = \ell\exp(O(\ell)) =\exp(O(\ell))$.

    If $\ell=0$, since $\hat\mu$ is either the eventual external field (which is a constant bounded away from $0$), or $\hat\mu>\delta$, we have $s(\ell)=\abs{\mathcal{S}_k}=O(1)$.

    If $\ell>0$, then $\abs{Y_i}=\exp(O(\ell))$ and thus $\abs{\mathcal{Y}}=\exp(O(\ell))$. By our choice of $\delta$, it holds that $w=O(\ell)$ and thus $\abs{S_w}=O(\ell)$. Therefore,
    \[
    s(\ell) = \exp(O(\ell))+\max\set{s(\ell-1),O(\ell)} = \ell\exp(O(\ell))=\exp(O(\ell)).
    \]
    This concludes the proof.
    \qed
\end{proof}

\bibliographystyle{plain}
\bibliography{ferrohard}

\begin{thebibliography}{10}

\bibitem{BG08}
Antar Bandyopadhyay and David Gamarnik.
\newblock Counting without sampling: Asymptotics of the log-partition function
  for certain statistical physics models.
\newblock {\em Random Structures \& Algorithms}, 33(4):452--479, 2008.

\bibitem{BulatovG05}
Andrei~A. Bulatov and Martin Grohe.
\newblock The complexity of partition functions.
\newblock {\em Theoretical Computer Science}, 348(2-3):148--186, 2005.

\bibitem{cai2013approximating}
Jin-Yi Cai, Leslie~Ann Goldberg, Heng Guo, and Mark Jerrum.
\newblock Approximating the partition function of two-spin systems on bipartite
  graphs.
\newblock {\em arXiv preprint arXiv:1311.4451}, 2013.

\bibitem{dyer2004relative}
Martin Dyer, Leslie~Ann Goldberg, Catherine Greenhill, and Mark Jerrum.
\newblock The relative complexity of approximate counting problems.
\newblock {\em Algorithmica}, 38(3):471--500, 2004.

\bibitem{dicho_DGJ10}
Martin~E. Dyer, Leslie~Ann Goldberg, and Mark Jerrum.
\newblock An approximation trichotomy for boolean \#csp.
\newblock {\em Journal of Computer and System Sciences}, 76(3-4):267--277,
  2010.

\bibitem{galanis2012inapproximability}
A.~Galanis, D.~Stefankovic, and E.~Vigoda.
\newblock Inapproximability of the partition function for the antiferromagnetic
  ising and hard-core models.
\newblock {\em Arxiv preprint arXiv:1203.2226}, 2012.

\bibitem{goldberg2007complexity}
Leslie~Ann Goldberg and Mark Jerrum.
\newblock The complexity of ferromagnetic ising with local fields.
\newblock {\em Combinatorics, Probability \& Computing}, 16(1):43--61, 2007.

\bibitem{goldberg2012approximating}
Leslie~Ann Goldberg and Mark Jerrum.
\newblock Approximating the partition function of the ferromagnetic potts
  model.
\newblock {\em Journal of the ACM}, 59(5):25, 2012.

\bibitem{goldberg2003computational}
Leslie~Ann Goldberg, Mark Jerrum, and Mike Paterson.
\newblock The computational complexity of two-state spin systems.
\newblock {\em Random Structures \& Algorithms}, 23(2):133--154, 2003.

\bibitem{JS93}
Mark Jerrum and Alistair Sinclair.
\newblock Polynomial-time approximation algorithms for the ising model.
\newblock {\em SIAM Journal on Computing}, 22(5):1087--1116, 1993.

\bibitem{app_JSV04}
Mark Jerrum, Alistair Sinclair, and Eric Vigoda.
\newblock A polynomial-time approximation algorithm for the permanent of a
  matrix with nonnegative entries.
\newblock {\em Journal of the ACM}, 51:671--697, July 2004.

\bibitem{LLY12}
Liang Li, Pinyan Lu, and Yitong Yin.
\newblock Approximate counting via correlation decay in spin systems.
\newblock In {\em Proceedings of SODA}, pages 922--940, 2012.

\bibitem{LLY13}
Liang Li, Pinyan Lu, and Yitong Yin.
\newblock Correlation decay up to uniqueness in spin systems.
\newblock In {\em Proceedings of SODA}, pages 67--84, 2013.

\bibitem{counting-edge-cover}
Chengyu Lin, Jingcheng Liu, and Pinyan Lu.
\newblock A simple {FPTAS} for counting edge covers.
\newblock In {\em Proceedings of SODA}, 2014.

\bibitem{fibo-approx}
Pinyan Lu, Menghui Wang, and Chihao Zhang.
\newblock {FPTAS} for weighted fibonacci gates and its applications.
\newblock {\em Submitted}.

\bibitem{SST}
Alistair Sinclair, Piyush Srivastava, and Marc Thurley.
\newblock Approximation algorithms for two-state anti-ferromagnetic spin
  systems on bounded degree graphs.
\newblock In {\em Proceedings of SODA}, pages 941--953, 2012.

\bibitem{SS12}
Allan Sly and Nike Sun.
\newblock The computational hardness of counting in two-spin models on
  $d$-regular graphs.
\newblock In {\em Proceedings of FOCS}, pages 361--369, 2012.

\bibitem{col_Vigoda99}
Eric Vigoda.
\newblock Improved bounds for sampling coloring.
\newblock In {\em Proceedings of FOCS}, pages 51--59, 1999.

\bibitem{Weitz06}
Dror Weitz.
\newblock Counting independent sets up to the tree threshold.
\newblock In {\em Proceedings of STOC}, pages 140--149, 2006.

\end{thebibliography}

\appendix

\section{The $\beta>1$ case}
We prove the hardness result for the case of $\beta>1$. This follows a similar argument as that in~\cite{goldberg2007complexity} and is known as a folklore. We include a formal proof here to be self-contain.  

\begin{theorem}
  \label{thm:large-beta}
  Let $\gamma>\beta> 1$ and $\mu>\frac{\gamma-1}{\beta-1}$. Then $\ferro(\beta, \gamma, \mu)$ is $\BIS$-hard. 
\end{theorem}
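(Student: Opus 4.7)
The plan mirrors the Goldberg--Jerrum argument and the gadget philosophy of Section~3. The key inequality is $h(\mu)>1$, which (since $\beta>1$) is equivalent to $\mu>\frac{\gamma-1}{\beta-1}$: attaching a leaf of effective field $\mu$ to any vertex multiplies the vertex's effective field by $h(\mu)>1$, so repeated leaf attachments strictly push the realized effective field beyond $\mu$. Iterating recursively with deeper tree gadgets lets the effective field climb past any prescribed bound.

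First, I would formalize this by a construction analogous to Algorithm~\ref{algo:construct}: for a target value $\hat\mu\in(\mu, M]$, recursively split the target using the iteration $x\mapsto\mu h(x)^d$ (now used in its expanding regime, starting from $\mu$), with leaves approximated by basic gadgets like $\mathcal{S}_w$ or truncated $d$-ary trees $\mathcal{T}_t$. A correlation-decay analysis paralleling Lemma~\ref{thm:gadget} applies verbatim: the Lipschitz bound $\alpha<1$ on $\ln h$ is a property of $h$ itself, not of whether we are iterating towards a fixed point from above or below, so the truncation error at the leaves decays geometrically in the recursion depth and the gadget has polynomial size.

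Second, I would appeal to Theorem~\ref{thm:1} with a bounded set $\mathcal{V}$ lying entirely in the simulatable range $(\mu, M]$ and substitute each vertex by its simulating gadget to obtain the reduction $\ferro(\beta,\gamma,\mathcal{V})\le_{AP}\ferro(\beta,\gamma,\mu)$. Producing such a $\mathcal{V}$ requires a shifted variant of Lemma~\ref{lem:reduction}: rather than using the anti-ferromagnetic parameter $\mu'$ close to $1$ (as in Corollary~\ref{cor:hardness}), one selects the bipartite anti-ferromagnetic starting point and the row/column scalings $a/b$, $c/d$ of the matrix transformation so that all of the translated ferromagnetic fields $\mu'(\sqrt{\gamma/\beta})^{d_u}$ and $\mu'^{-1}(\sqrt{\gamma/\beta})^{d_v}$ (across both sides of the bipartition) exceed $\mu$.

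The main obstacle is exactly this field-range matching. By an FKG-type monotonicity argument, in a ferromagnetic system with uniform positive field $\mu$, every vertex weight gadget satisfies $\mu(G)\ge\mu$---so we genuinely cannot simulate fields below $\mu$ on the nose. Consequently the bounded-field $\BIS$-hardness from Theorem~\ref{thm:existence} (which uses fields as small as $1$) is not immediately applicable; the delicate part of the proof is to exhibit a variant of that hardness reduction whose entire required field set sits above $\mu$, and to verify the approximation-preserving property survives the substitution of each ``fake'' external field by its vertex weight gadget.
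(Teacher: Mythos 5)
Your proposal takes a fundamentally different route from the paper, and it has a real gap that the paper's argument avoids entirely.

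The paper does not build recursive tree gadgets in the $\beta>1$ regime. Its proof of Theorem~\ref{thm:large-beta} uses a single vertex $v$ with $x$ \emph{self-loops} and $y$ bristles, giving $\mu(G_m)=\mu\left(\frac{\beta}{\gamma}\right)^x\left(\frac{\mu\beta+1}{\mu+\gamma}\right)^y$. A self-loop multiplies the effective field by $\beta/\gamma<1$ (it pushes the field \emph{down}), while a bristle multiplies by $h(\mu)=\frac{\mu\beta+1}{\mu+\gamma}>1$ exactly because $\mu>\frac{\gamma-1}{\beta-1}$ (it pushes the field \emph{up}). With both a down-move and an up-move available, a Diophantine/Euclidean argument lets one approximate $\ln\hat\mu$ as $y\ln h(\mu)-x\ln(\gamma/\beta)+\ln\mu$ to within $1/m$ with $x,y=m^{O(1)}$, so \emph{every} $\hat\mu>0$ is realizable. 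The whole correlation-decay machinery of Algorithm~\ref{algo:construct}, which you propose to re-run here, is unnecessary for $\beta>1$; it was designed precisely because when $\beta\le 1$ all realizable fields are bounded above by $\mu^*$ and one needs precision control, not because of a lower-bound obstruction.

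The gap in your plan is exactly the issue you flag at the end but do not resolve. You correctly observe that with simple tree/star gadgets alone one cannot realize any field below $\mu$ when $h(\mu)>1$. You then say one must ``exhibit a variant of that hardness reduction whose entire required field set sits above $\mu$,'' but no such variant is given, and it is not at all clear that one exists along the lines of Lemma~\ref{lem:reduction}: the fields that reduction produces on the $R$-side are $\frac{1}{\mu'}\left(\sqrt{\gamma/\beta}\right)^{d_v}$, and the smallest of these is close to $\sqrt{\gamma/\beta}$, which is \emph{always} strictly less than $\frac{\gamma-1}{\beta-1}<\mu$ for $\gamma>\beta>1$ (this follows from $(\gamma-1)\sqrt{\beta}-(\beta-1)\sqrt{\gamma}=(\sqrt{\gamma}-\sqrt{\beta})(\sqrt{\beta\gamma}+1)>0$). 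So without the self-loop trick (or some other mechanism to decrease the effective field), your reduction cannot produce the needed low fields, and the proposal does not close. The missing idea is the self-loop; once you have it, the proof collapses to a short density-of-a-two-generator-semigroup argument and you do not need a ``shifted'' hardness theorem at all.
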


% To do that, we will simulate the reduction in the following theorem.

% \begin{theorem}[\cite{goldberg2007complexity}]
%   \label{thm:isingreduction}
%   Let $\mathcal{V}=\{\mu_0,1,\mu_1\}$ where $0<\mu_0<1$ and $\mu_1>1$ are two reals. Let $G$ be an instance of ferromagnetic Ising model parameterized by $(a,a,\Lambda)$ where $\Lambda$ is an arbitrary set of vertex weights, then for every real $\beta>1$ and every $\eps>0$, one can construct $(\beta,\beta,\mathcal{V})$ instance $\hat G$ such that
%   \[
%   \exp(-\eps)\le\frac{Z_{(a,a,\Lambda)}(G)}{Z_{(\beta,\beta,\mathcal{V})}(\hat G)}\le\exp(\eps).
%   \]
% \end{theorem}

% Before actually proving the theorem, we are going to first elaborate a bit on how this $\beta>1$ property enables us to simulate various vertex weights and edges weights.

% \begin{lemma}\label{lem:twovalues}
%   There exist two realizable constants $0<\mu_0<1$ and $\mu_1>1$.
% \end{lemma}

% \begin{proof}
% 	Here are the two ways of simulation:
%   \begin{itemize}
%   \item A single vertex with one selfloop realizes $\frac{\beta}{\gamma}<1$.
%   \item A star with $k+1$  vertices realizes $\mu\left(\frac{\beta\mu+1}{\mu+\gamma}\right)^k$ if we take the root as the output. Since $\frac{\beta\mu+1}{\mu+\gamma}>1$, we can choose sufficiently large $k$ to make the value greater than $1$.
%   \end{itemize}
% \end{proof}
We follow the same idea of simulating external field and making use of Theorem \ref{thm:existence} to conclude the proof. In this case, we can simulate all positive external fields.

\begin{lemma}\label{lem:realizevertex}
  For every $\hat\mu>0$, there is a family of vertex weight gadgets $\set{G_m }_{m\ge 1}$ that realizes $\hat\mu$. Moreover, $G_m$ is constructible in time $m^{O(1)}$ and
  \begin{equation}
    \label{eq:vertexweight}
    \exp(-\frac{1}{m})\le\frac{\mu(G_m)}{\hat\mu}\le\exp(\frac{1}{m}).
  \end{equation}
\end{lemma}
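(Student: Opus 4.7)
The plan is to mimic the correlation-decay construction of Algorithm \ref{algo:construct} in Section 3.2. The key function $h(x)=\frac{\beta x+1}{x+\gamma}$ is strictly monotone on $[0,\infty)$ with range $[1/\gamma,\beta)$; because $\beta>1$ and $\mu>\frac{\gamma-1}{\beta-1}$ we have $h(\mu)>1$, so attaching a field-$\mu$ leaf \emph{increases} the effective field of the center (opposite to the situation in the main section). Consequently the star family $\mathcal{S}_w$ realizes the unbounded increasing sequence $\mu h(\mu)^w$, and the depth-$t$ $d$-ary tree family $\mathcal{T}_t$ realizes fields converging to the positive fixed point of $x=\mu h(x)^d$, which can be driven arbitrarily large by taking $d$ large.

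To access targets $\hat\mu\in(0,\mu)$ I would introduce an auxiliary primitive: a ``path-edge'', i.e.\ a length-two path through an internal field-$\mu$ vertex, acts as a single edge whose effective matrix (after normalization) has ratios $\beta',\gamma'$ different from the original $\beta,\gamma$. By composing such path-edges and attaching additional leaves to the intermediate vertices, one can arrange an effective sub-system in which the analogue of $h$ at $\mu$ is less than $1$, so that stars built inside this effective sub-system push the output's effective field arbitrarily close to $0$. With a small-field and a large-field primitive both in hand, I would run essentially the same recursion as Algorithm \ref{algo:construct}: given target $\hat\mu$ at depth $\ell$, decompose $\hat\mu=\mu\prod_{i=1}^{d}h(x_i)$, set $d-1$ of the $x_i$'s to extreme basic-primitive values, and hand the residual $\hat\mu'$ to a recursive call of depth $\ell-1$.

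The error analysis then transfers verbatim from Lemma \ref{thm:gadget}: the correlation-decay inequality
\[
\left|\frac{x\,h'(x)}{h(x)}\right|=\frac{(\beta\gamma-1)x}{(\beta x+1)(x+\gamma)}\le \alpha:=\frac{\beta\gamma-1}{(\sqrt{\beta\gamma}+1)^{2}}<1
\]
shows that the relative log-error contracts by at least a factor $\alpha$ per recursion level, so depth $\ell=\Theta(\log m)$ yields the required $\exp(\pm 1/m)$ accuracy while keeping $|G_m|=m^{O(1)}$. The main obstacle will be to establish two loop invariants in the style of the well-definedness argument following Algorithm \ref{algo:construct}: (i) that at every recursion level the decomposition $\hat\mu=\mu\prod h(x_i)$ admits a choice of $x_i$'s inside the achievable range, so that the recursion is well posed for arbitrary $\hat\mu>0$; and (ii) that the path-edge construction indeed produces an effective sub-system realizing values of $h(\mu)$ strictly below $1$, which is what allows small-field targets to be hit at all. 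Once these combinatorial case analyses are completed, the correlation-decay bookkeeping of Section 3.2 carries over without surprises and proves the lemma.
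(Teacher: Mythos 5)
Your proposal runs into a fundamental obstruction that your "path-edge" idea does not escape, and the paper's actual proof is much simpler and uses an ingredient you omit entirely: \emph{self-loops}.

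The core difficulty is the following. In the regime $\beta>1$, $\mu>\frac{\gamma-1}{\beta-1}$ you correctly observe that $h(\mu)>1$, so a pendant leaf with field $\mu$ \emph{increases} the effective field of its parent. But this observation implies something stronger that blocks your plan: \emph{no tree gadget can realize any field below $\mu$}. Indeed, by induction from the leaves up, if every subtree of a tree rooted at the output has effective field $\ge\mu$, then since $h$ is increasing and $h(\mu)>1$, each child contributes a factor $h(\cdot)\ge h(\mu)>1$ to its parent, so the parent's effective field is $\ge \mu\cdot 1=\mu$. Thus every vertex, and in particular the output, has effective field $\ge\mu$. Your "path-edge" device (a subdivided edge with decorated intermediate vertices) produces a subdivision of a tree, which is still a tree in the underlying $(\beta,\gamma)$-system, so the same bound applies. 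Concretely, a pendant path $v$--$w$--$u$ contributes the factor $h(\mu h(\mu))>h(\mu)>1$ to the output $v$, so it pushes the field further \emph{up}, not down. The loop invariant (ii) that you flag as "the main obstacle'' is not merely unproven---for tree-shaped gadgets it is false, and the recursion you propose is not well posed for any target $\hat\mu<\mu$.

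The paper's proof sidesteps all of the recursive machinery. It uses two elementary primitives at a single vertex $v$: self-loops, each of which multiplies $\mu(G)$ by $\beta/\gamma<1$ (the only truly field-decreasing move available here, and one that requires a cycle, which is why your tree-based construction misses it), and bristles (pendant $\mu$-leaves), each multiplying by $h(\mu)=\frac{\mu\beta+1}{\mu+\gamma}>1$. With $x$ self-loops and $y$ bristles one gets
$\mu(G_m)=\mu\left(\frac{\beta}{\gamma}\right)^x\left(\frac{\mu\beta+1}{\mu+\gamma}\right)^y$,
and approximating $\ln\hat\mu$ to within $1/m$ reduces to a two-parameter Diophantine approximation with $a=\ln(\gamma/\beta)>0$ and $b=\ln\frac{\mu\beta+1}{\mu+\gamma}>0$, solved by an extended-Euclidean-type procedure with $x,y=m^{O(1)}$. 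No recursion, no correlation decay, no fixed-point analysis is needed. I would encourage you to look for the simplest monotone-decreasing primitive before reaching for the heavy recursive construction; once you notice that the self-loop factor $\beta/\gamma$ is available, the lemma is essentially a one-dimensional density argument.
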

\begin{proof}
  For any $m\ge 1$, we add $x$ self-loops and $y$ bristles to a single vertex $v$, where $x$ and $y$ are integers to be determined. Let $v$ be the output of  $G_m$, then $\mu(G_m)=\mu\left(\frac{\beta}{\gamma}\right)^x\left(\frac{\mu\beta+1}{\mu+\gamma}\right)^y$. Denote $a=\ln\frac{\gamma}{\beta}$, $b=\ln\frac{\mu\beta+1}{\mu+\gamma}$ and $c=\frac{\ln\hat\mu}{\ln\mu}$, then \eqref{eq:vertexweight} is equivalent to
  \[
  \left|\left(y\cdot b-x\cdot a\right)-c\right|\le\frac{1}{m}.
  \]
  We can use a procedure similar to extended Euclidean algoroithm to find such integers $x,y$ in time $O(\ln m)$, such that it also guarantees $x,y=m^{O(1)}$.
\end{proof}

% \begin{lemma}\label{lem:realizeedge}
%   There exists $\hat\beta>1$ and a family of edge weight gadgets $\{G_m\mid m\ge 1\}$ that realizes
%   $
%   \left[\begin{array}{cc}
%       \hat\beta & 1\\
%       1 & \hat\beta
%   \end{array}\right].
%   $
%   Moreover, for every $m\ge 1$, $G_m$ is constructible in time $m^{O(1)}$ and
%   \begin{equation}
%     \label{eq:edgeweight}
%     \exp(-\frac{1}{m})\le\frac{\beta(G_m)}{\hat\mu}\le\exp(\frac{1}{m}).
%   \end{equation}
% \end{lemma}

% \begin{proof}
%   Let $\hat\beta=\sqrt{\beta\gamma}$. $G_m$ is a single edge $(u,v)$ with vertex weight $\sqrt{\frac{\gamma}{\beta}}$ each. The vertex weight can be realized by the gadget constructed in Lemma \ref{lem:realizevertex}, to the desired precision.
% \end{proof}

%Now we are ready to prove Theorem \ref{thm:large-beta}. \qed
%
%\begin{proof}[of Theorem \ref{thm:large-beta}]
%Recall from Theorem \ref{thm:existence} that it is $\BIS$-hard to approximate the partition function of ferromagnetic two-spin system $(\beta,\gamma,\mathcal{V})$ if $ \mu \le \lambda \left(\sqrt{\frac{\gamma}{\beta}}\right)^{\lfloor \frac{1}{\alpha} \rfloor + 1}$ for all $\mu\in\mathcal{V}$, where $\alpha=\frac{\sqrt{\beta\gamma}-1}{\sqrt{\beta\gamma}+1}$ is a constant less than one.
%
%Then by Lemma \ref{lem:realizevertex}, all external fields in $\mathcal{V}$ can be efficiently simulated.
%\end{proof}

\section{Improved Tractable Result}

In this section, we establish the following tractable result:

\begin{theorem}
Let $\beta<\gamma$, $\beta \gamma>1$ and $\mu\le\gamma/\beta$. 
Then there is an FPRAS for $\ferro(\beta,\gamma,\mu)$.
\end{theorem}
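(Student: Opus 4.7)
The plan is to reduce $\ferro(\beta,\gamma,\mu)$ to a ferromagnetic Ising instance whose local external fields are all at most $1$, so that the classical Jerrum--Sinclair FPRAS (in its extension to consistently one-sided local fields) applies. The standard holographic transformation conjugates the edge matrix $A=\left[\begin{smallmatrix}\beta & 1\\ 1 & \gamma\end{smallmatrix}\right]$ by the diagonal matrix $D=\mathrm{diag}\bigl((\gamma/\beta)^{1/4},(\beta/\gamma)^{1/4}\bigr)$, which yields an Ising edge matrix with parameter $\sqrt{\beta\gamma}>1$ and causes each vertex $v$ of the input to pick up an effective Ising field of $\mu\cdot(\beta/\gamma)^{d_v/2}$, where $d_v$ is its degree. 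This effective field is at most $1$ precisely when $\mu\le(\gamma/\beta)^{d_v/2}$; the previously known threshold $\mu\le\sqrt{\gamma/\beta}$ corresponds to $d_v\ge 1$, and to push the bound to $\mu\le\gamma/\beta$ I need $d_v\ge 2$ at every vertex.

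To achieve this, I preprocess $G$ by repeatedly stripping off degree-$0$ and degree-$1$ vertices until only the 2-core $G^{\ast}$ remains. An isolated vertex just contributes a factor $1+\mu$. Summing out a leaf $v$ attached to $u$ multiplies the partition function by $\mu+\gamma$ and replaces the effective field at $u$ by $\mu_u\cdot h(\mu)$, where $h(\mu)=(\beta\mu+1)/(\mu+\gamma)$ is the familiar tree-recursion function. The key inequality I need is $h(\mu)\le 1$ in the regime $\mu\le\gamma/\beta$: this is trivial for $\beta\le 1$, and for $\beta>1$ it follows from the elementary bound $\gamma/\beta\le(\gamma-1)/(\beta-1)$, which is itself equivalent to $\beta\le\gamma$. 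Hence, after iterating the reduction (at most $|V|$ rounds), every vertex of $G^{\ast}$ has degree $\ge 2$ and carries an updated field of the form $\mu_u=\mu\cdot h(\mu)^{k_u}\le\mu$ for some integer $k_u\ge 0$, while all scalar factors accumulated along the way are polynomially bounded and exactly recorded.

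Applying the holographic transformation to $G^{\ast}$ now produces a ferromagnetic Ising instance on $G^{\ast}$ whose effective local field at vertex $v$ satisfies
\[
\mu_v'=\mu_v\cdot(\beta/\gamma)^{d_v/2}\le(\gamma/\beta)\cdot(\beta/\gamma)^{d_v/2}\le 1,
\]
since $\mu_v\le\mu\le\gamma/\beta$ and $d_v\ge 2$. Every local field therefore lies in $(0,1]$, so the Jerrum--Sinclair subgraphs-world FPRAS for ferromagnetic Ising with consistently one-sided local fields (cf.~\cite{JS93,goldberg2003computational,goldberg2007complexity}) approximates $Z(G^{\ast})$ in time $\mathrm{poly}(|G|,1/\eps)$; combining this with the exactly tracked preprocessing factors yields the desired FPRAS for $\ferro(\beta,\gamma,\mu)$. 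The only delicate point in the argument is the monotonicity inequality $h(\mu)\le 1$ under $\mu\le\gamma/\beta$, which is exactly what prevents iterated leaf removal from inflating the fields beyond $\gamma/\beta$ and thus preserves consistency after the transformation; everything else is a direct assembly of standard reductions.
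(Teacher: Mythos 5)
Your proposal is correct and takes essentially the same route as the paper: iteratively strip degree-one vertices while updating neighbor fields via $h$, observe that all fields stay at most $\mu\le\gamma/\beta$, then apply the degree-dependent holographic rescaling to obtain a ferromagnetic Ising instance on the $2$-core with every effective field at most $1$, to which the Jerrum--Sinclair/Goldberg--Jerrum FPRAS applies. One small imprecision: when you sum out a leaf $v$ whose \emph{current} field is $\mu_v$ (which after several rounds may be smaller than $\mu$), the pulled-out factor is $\mu_v+\gamma$ and the update is $h(\mu_v)$ rather than $h(\mu)$, so the accumulated fields need not have the exact form $\mu\cdot h(\mu)^{k_u}$---but since $h$ is increasing and $\mu_v\le\mu$, your bound $h(\mu_v)\le h(\mu)\le 1$ and the conclusion $\mu_u\le\mu$ survive unchanged.
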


The proof of this theorem follows by refining the proof in \cite{goldberg2003computational}, where they establish the tractable result for $\mu\le\left(\gamma/\beta\right)^{\delta/2}$ for $\delta$ being the \emph{minimum} degree of vertices in the graph. Specifically, we first contract all vertices with degree one and modify the external fields of their neighboring vertices,  this only scales the partition function by a constant.
Next, just as in~\cite{goldberg2003computational}, we shall reduce a $(\beta,\gamma,\mu)$ instance to a ferromagnetic Ising instance and apply the following celebrated result, which is first introduced in~\cite{JS93} for uniform external fields and refined for non-uniform external fields in~\cite{goldberg2003computational}:

\begin{theorem}[\cite{JS93} and \cite{goldberg2003computational}]
  \label{thm:isingfpras}
  There is an FPRAS for Ising system $(a,a,\mathcal{V})$ provided that $a>1$ and all external fields in $\mathcal{V}$ are at most one.
\end{theorem}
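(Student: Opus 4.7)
The plan is to adapt the argument of \cite{goldberg2003computational}: reduce the problem to a ferromagnetic Ising instance with non-uniform external fields all bounded by $1$, and then invoke Theorem \ref{thm:isingfpras}. The improvement from $\mu\le (\gamma/\beta)^{\delta/2}$ (which holds when every vertex has degree $\ge\delta$) to the sharp threshold $\mu\le\gamma/\beta$ comes from a preprocessing step that guarantees minimum degree at least $2$ before we apply the Ising transformation.

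First I would repeatedly contract all vertices of degree at most one. A degree-zero vertex $v$ with current field $\mu_v$ simply contributes an explicit factor $(1+\mu_v)$ and may be deleted. For a degree-one vertex $v$ with field $\mu_v$ attached to a neighbor $u$ of field $\mu_u$, summing out $\sigma_v$ gives $\mu_v\beta+1$ when $\sigma_u=0$ and $\mu_v+\gamma$ when $\sigma_u=1$; this amounts to scaling the partition function by $(\mu_v+\gamma)$ and replacing $\mu_u$ by $\mu_u\cdot(\mu_v\beta+1)/(\mu_v+\gamma)$. The key observation is that $\mu_v\le \gamma/\beta$ implies $(\mu_v\beta+1)/(\mu_v+\gamma)\le 1$, equivalently $\mu_v(\beta-1)\le\gamma-1$, which holds either because $\beta\le 1$ (so the left side is nonpositive while $\gamma>1$) or because $\mu_v\le\gamma/\beta\le(\gamma-1)/(\beta-1)$ (using $\gamma\ge\beta$). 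Hence the invariant ``every remaining field is at most $\mu$'' is preserved throughout the iteration, and the process terminates with a graph of minimum degree at least $2$ together with an explicit scaling factor that is computed exactly.

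Next I apply the standard holographic change of variables sending each vertex in state $1$ through the multiplier $\sqrt{\gamma/\beta}$: this turns the edge matrix $\mathbf{A}$ into a symmetric Ising matrix with parameter $a=\sqrt{\beta\gamma}>1$, and the field at a vertex $v$ of current degree $d_v$ and field $\mu_v$ becomes the Ising field $\mu_v'=\mu_v(\beta/\gamma)^{d_v/2}$, up to a global factor $(\beta/\gamma)^{|E|/2}$ absorbed into the scaling. Because the preprocessing ensures $d_v\ge 2$ and $\mu_v\le\mu\le\gamma/\beta$, one has $\mu_v'\le (\gamma/\beta)(\beta/\gamma)=1$ for every $v$, so the resulting Ising instance lies exactly in the regime covered by Theorem \ref{thm:isingfpras}. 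Combining its FPRAS with the explicit scaling factors accumulated during preprocessing yields an FPRAS for the original $\ferro(\beta,\gamma,\mu)$ instance.

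The main obstacle is essentially the one just resolved in the preprocessing: if contracting a degree-one vertex could ever \emph{increase} the neighbor's field, the iteration might blow past $\gamma/\beta$ and the subsequent Ising fields might exceed $1$. The inequality $\mu\le\gamma/\beta\Rightarrow\mu\le(\gamma-1)/(\beta-1)$, which follows from $\gamma\ge\beta$, is precisely what prevents this and is the reason $\gamma/\beta$ is the right threshold; after that one-line check, the remainder reduces to bookkeeping of the scaling factors and a direct appeal to Theorem \ref{thm:isingfpras}.
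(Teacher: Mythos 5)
The statement you were asked to prove is Theorem \ref{thm:isingfpras} itself --- the Jerrum--Sinclair FPRAS for the ferromagnetic Ising system $(a,a,\mathcal{V})$ with $a>1$ and all external fields at most one. In the paper this is an \emph{imported} result, quoted from \cite{JS93} and \cite{goldberg2003computational} without proof. Your proposal does not prove it: what you actually prove is the other theorem of that section, namely that $\ferro(\beta,\gamma,\mu)$ admits an FPRAS whenever $\mu\le\gamma/\beta$, and your argument explicitly says ``then invoke Theorem \ref{thm:isingfpras}.'' As a proof of Theorem \ref{thm:isingfpras} this is circular. (As a proof of the reduction theorem it is essentially the paper's own argument: contract degree-one vertices, check that $h(\mu_v)\le 1$ so the neighbor's field never exceeds $\mu$, then apply the $\sqrt{\gamma/\beta}$ change of variables so that the Ising field $\mu_v(\beta/\gamma)^{d_v/2}$ is at most one once $d_v\ge 2$.)

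The actual content of Theorem \ref{thm:isingfpras} is of a completely different nature and none of it appears in your proposal. The Jerrum--Sinclair proof passes to the ``subgraphs world'' via the high-temperature (even-subgraphs) expansion of the Ising partition function, runs a Markov chain on edge subsets whose stationary distribution encodes $Z$, and bounds its mixing time by a canonical-paths congestion argument; the hypothesis that every external field is at most one is exactly what keeps the subgraphs-world vertex weights in $[0,1]$ so that the stationary measure is well defined and the congestion bound goes through. The extension from a uniform field to vertex-dependent fields in $\mathcal{V}$ is the refinement carried out in \cite{goldberg2003computational}. If the goal is to prove this theorem rather than cite it, that MCMC machinery is the missing idea; if the goal was the $\mu\le\gamma/\beta$ tractability result, your write-up matches the paper but is attached to the wrong statement.
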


Let $G(V,E)$ be an instance of $(\beta,\gamma,\mu)$ system, we repeatedly apply the following operations until no degree one vertices can be found:
\begin{enumerate}
\item Pick a vertex $u$ of degree one. Denote its incident edge by $e=(u,v)$. Let $\mu_u$ and $\mu_v$ be external fields on $u$ and $v$ respectively.
\item Remove $u$ and edge $(u,v)$, update $\mu_v \gets \mu_vh(\mu_u)$.
\end{enumerate}

Let $G'(V',E')$ be the remaining graph. $G'$ either has no vertices of degree one, or it only contains a single vertex.
Moreover, for every $v\in V'$, the external fields $\mu'_v$ satisfies $\mu'_v\le\mu$. This can be easily verified given that $\mu\le\gamma/\beta$. 
Let $\mathcal{U}=\{\mu'_v\mid v\in V'\}$, consider $G'$ as an instance of $(\beta,\gamma,\mathcal{U})$ system, clearly $Z_{(\beta,\gamma,\mu)}(G)=Z^*\cdot Z_{(\beta,\gamma,\mathcal{U})}(G')$ where $Z^*$ is an easily polynomial-time computable factor.

Let $\mathcal{V}=\set{\mu'_v\left(\frac{\beta}{\gamma}\right)^{d_v/2}\Big| v\in V'}$ where $d_v$ is the degree of $v$ in $G'$. Let $\hat G(\hat V,\hat E)$ be a copy of $G'$ with $\hat\mu_v=\mu'_v\left(\frac{\beta}{\gamma}\right)^{d_v/2}$ for every $v\in\hat V$. We are going to verify that $Z_{(\beta,\gamma,\mathcal{U})}(G')=\sqrt{\frac{\gamma}{\beta}}^{|E'|}\cdot Z_{(a,a,\mathcal{V})}(\hat G)$ for $a=\sqrt{\beta\gamma}$.

Define $A=
\begin{bmatrix}
  \beta & 1\\
  1 & \gamma
\end{bmatrix}
$,
$A'=
\begin{bmatrix}
  \gamma & \sqrt{\gamma/\beta}\\
  \sqrt{\gamma/\beta} & \gamma
\end{bmatrix}
$ and
$ \hat A=
\begin{bmatrix}
  \sqrt{\beta\gamma} & 1\\
  1 & \sqrt{\beta\gamma}
\end{bmatrix}
$.
Then,

\begin{align*}
  Z_{(\beta,\gamma,\mathcal{U})}(G')
  &=\sum_{\sigma\in\{0,1\}^{V'}}\prod_{(u,v)\in E'}A_{\sigma_u,\sigma_v}\prod_{v\in V'}{\mu'}_v^{1-\sigma_v}\\
  &=\sum_{\sigma\in\{0,1\}^{V'}}\prod_{(u,v)\in E'}A'_{\sigma_u,\sigma_v}\prod_{v\in V'}\left(\left(\sqrt{\frac{\beta}{\gamma}}\right)^{d_v}{\mu'}_v\right)^{1-\sigma_v}\\
  &=\sqrt{\frac{\gamma}{\beta}}^{|E'|}\sum_{\sigma\in\{0,1\}^{V'}}\prod_{(u,v)\in E'}\hat A_{\sigma_u,\sigma_v}\prod_{v\in V'}\left(\left(\sqrt{\frac{\beta}{\gamma}}\right)^{d_v}{\mu'}_v\right)^{1-\sigma_v}\\
  &=\sqrt{\frac{\gamma}{\beta}}^{|E'|}\sum_{\sigma\in\{0,1\}^{\hat V}}\prod_{(u,v)\in \hat E}\hat A_{\sigma_u,\sigma_v}\prod_{v\in V'}\hat\mu_v^{1-\sigma_v}\\
  &=\sqrt{\frac{\gamma}{\beta}}^{|E'|}\cdot Z_{(a,a,\mathcal{V})}(\hat G)
\end{align*}

Finally, to apply Theorem \ref{thm:isingfpras}, we only need $\hat\mu_v \le 1$ for all $v\in \hat V$.
Recall that $\hat G$ has $\delta \ge 2$, hence $\mu\le\gamma/\beta$ implies $\hat\mu_v \le 1$.
To sum up, this concludes the proof.

\end{document}